\newcommand{\bm}[1]{\boldsymbol{#1}}    
\journal{Applied Mathematical Modelling, 2022, 101: 432-452.}
\newtheorem{theorem}{Theorem}
\newtheorem{lemma}{Lemma}
\newtheorem{definition}{Definition}
\newtheorem{proposition}{Proposition}
\newtheorem{remark}{Remark}
\begin{document}

\begin{frontmatter}

\title{On unified framework for continuous-time grey models: An integral matching perspective}

\author{Baolei Wei}
\ead{weibl@nuaa.edu.cn}

\author{Naiming Xie\corref{cor1}}
\ead{xienaiming@nuaa.edu.cn}
\cortext[cor1]{Corresponding author.}

\address{College of Economics and Management, Nanjing University of Aeronautics and Astronautics, Nanjing 210016, China}

\begin{abstract}
    Since most of the research about grey forecasting models is focused on developing novel models and improving accuracy, relatively limited attention has been paid to the modelling mechanism and relationships among diverse kinds of models. This paper aims to unify and reconstruct continuous-time grey models, highlighting the differences and similarities among different models. First, the unified form of grey forecasting models is proposed and simplified into a reduced-order ordinary differential equation. Then, the integral matching that consists of integral operator and least squares, is proposed to estimate the structural parameter and initial value simultaneously. The cumulative sum operator, an essential element in grey modelling, proves to be the discrete approximation of the integral operator. Next, grey models are reconstructed by the integral matching-based ordinary differential equations. Finally, the existing grey models are compared with the reconstructed models through extensive simulations, and a real-world example shows how to apply and further verify the reconstructed model.
\end{abstract}

\begin{keyword}
    grey system model \sep
    cumulative sum operator \sep
    integral matching \sep
    water supply forecast
\end{keyword}

\end{frontmatter}

\section{Introduction}\label{sec:1}

Over the past four decades since the seminal work on grey system theory \cite{Deng1982Control}, grey system models have been widely used to various fields from natural science through social science. Grey forecasting models which are pioneered in 1984 \cite{deng1984grey}, are constructed based on the prior assumption that the accumulation and release in many generalized energy systems conform to an exponential law \cite{Liu2017Grey}. Grey forecasting models utilize the Cusum (cumulative sum ) operator to mine the exponential characteristic hidden in the original time series and then the continuous-time dynamics (ordinary differential equations) are employed to fit the Cusum series. Generally, grey forecasting models, also refereed as to Grey Models in some literature, are abbreviated as GM($\varphi$, $\psi$), where $\varphi$ denotes the order of derivative and $\phi$ denotes the number of variables. Research has shown consistently that grey forecasting models are promising in solving a class of time series prediction problems, especially in the case of small-sample data sets \cite{Xie2017A}.
In the following, we review grey forecasting models from three viewpoints: the Cusum operator, the basic model and the extended ones.

\subsection{The cumulative sum operator and its inverse}\label{sec:2-1}

\begin{definition}\label{def:01} {(See \cite{li2011extended}.)}
    For the original time series $X{(t)}=\{x(t_1), x(t_2), \cdots, x(t_n)\}$, the Cusum series is defined as $ Y(t)=\left\{y(t_1), y(t_2), \cdots, y(t_n)\right\}$, where
    \[
        y(t_k)=\sum_{i=1}^{k} h_i x(t_i),\quad
        h_k=\begin{cases}
            1, & k=1 \\
            t_k-t_{k-1}, & k\geq 2
        \end{cases}.
    \]
    Correspondingly, the inverse Cusum operator is defined as
    \[
        x(t_k)=\begin{cases}
            \frac{1}{h_1}y(t_1), & k=1 \\
            \frac{1}{h_k}\left[y(t_k)-y(t_{k-1})\right], & k\geq 2
        \end{cases}.
    \]
\end{definition}

Note that the first time interval $h_1$ is set to 1 according to the empty product rule in pure mathematics \cite{ne2009invitation}. The Cusum operator associated with its inverse is an omni-directional formula for not only the equally spaced time series but the irregularly spaced ones \cite{li2011extended,Xiao2014The,yang2019novel}.

Definition \ref{def:01} shows that the original time series can be obtained by restoring the Cusum series. Inspired by this idea, grey forecasting models were first proposed by fitting and forecasting the Cusum series and then restoring the results \cite{deng1984grey}.
In a broad sense, the Cusum operator can be viewed as a data preprocessing technique to non-parametrically mine the pattern hidden in the original time series. There are some other advantages of Cusum operator, especially from a data visualization perspective. For example, it is difficult to identify the hidden pattern from the line graph of the original time series (see Figure \ref{fig:1}(a)), while the quasi-exponential characteristic is obvious in the cumulative sum series (see Figure \ref{fig:1}(b)).

\begin{figure}[!ht]
    \centering
    \includegraphics[scale=0.75, trim = 0 0 0 0, clip = true]{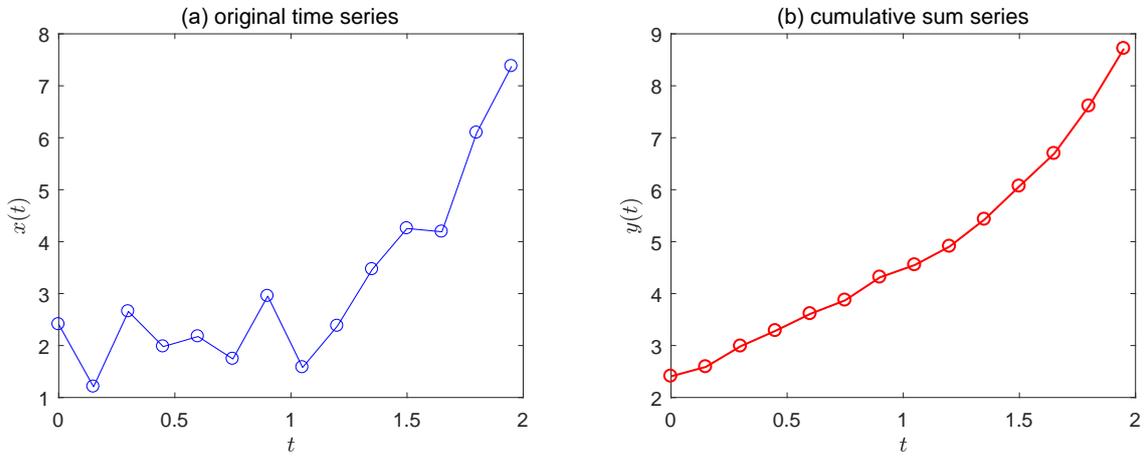}\\
    \caption{Plots of the original time series and the corresponding cumulative sum series.}
    \label{fig:1}
\end{figure}

In fact, grey forecasting models are not the only ones that use the Cusum operator as a pattern recognition tool. The similar idea was also employed in other disciplines, such as ecology \cite{colwell2004interpolating} and statistical process control \cite{chatterjee2009distribution}.

\subsection{The basic grey forecasting model}\label{sec:2-2}

The GM(1,1) model, employing the single-variable and first-order ordinary differential equation, is the most basic and popular grey forecasting model which has been substantiated in many fields. It consists of the continuous-time differential equation
\begin{equation}\label{eq:01}
    \frac{d}{dt} y(t)=a y(t)+b, ~ t \geq t_1
\end{equation}
and the discrete-time difference equation
\begin{equation}\label{eq:02}
    x(t_k)=a \left[\lambda y(t_{k-1})+(1-\lambda) y(t_{k})\right]+b
\end{equation}
where $k=2, 3, \cdots, n$, $n\geq 4$, and $\lambda\in [0,1]$ is a hyperparameter (also referred to as a background coefficient) whose value should be set ahead of the following parameter estimation. By using the least squares, the structural parameters are estimated from equation \eqref{eq:02} as
\begin{equation}\label{eq:03}
    \begin{bmatrix}
        \hat{a} & \hat{b}
    \end{bmatrix}^\mathsf{T}
    =\left(\bm{D}^\mathsf{T}\bm{D}\right)^{-1}\bm{D}^\mathsf{T}\bm{y}
\end{equation}
where
\[
    \bm{D}=\begin{bmatrix}
        \lambda y(t_{1})+(1-\lambda) y(t_{2}) & 1 \\
        \lambda y(t_{2})+(1-\lambda) y(t_{3}) & 1 \\
            \vdots & \vdots \\
        \lambda y(t_{n-1})+(1-\lambda) y(t_{n}) & 1
    \end{bmatrix}, ~
    \bm{y}=\begin{bmatrix}
        x(t_2) \\
        x(t_3) \\
        \vdots \\
        x(t_n)
    \end{bmatrix}.
\]

By setting the initial value parameter to
\begin{equation}\label{eq:04}
    y(t_1)=\eta,
\end{equation}
the time response function is calculated as
\begin{equation}\label{eq:05}
    \hat{y}(t)=\left(\eta+\frac{\hat{b}}{\hat{a}}\right) e^{\hat{a}(t-t_1)}-\frac{\hat{b}}{\hat{a}}, ~ t \geq t_1,
\end{equation}
then by using the inverse Cusum operator, the forecasting results corresponding to the original time series are obtained as $\hat{x}(t_1)=\eta$ and $\hat{x}(t_k)=\frac{1}{h_k}\left[\hat{y}(t_k)-\hat{y}(t_{k-1})\right]$, $k=2,3,\cdots,n+r$, where $r$ is the forecasting horizon.

On the basis of the above modelling process, the existing studies for improving GM(1,1) model can be divided into the following types:
\begin{enumerate}
    \item [(i)]
    Equation \eqref{eq:02} is the numerical discretization-based approximation of equation \eqref{eq:01} by using the weighted trapezoid rule which yields the classical trapezoid formula with the background coefficient equal to 0.5.
    Researchers have been making efforts to improve accuracy by optimizing the background coefficient, such as using  a differential evolution algorithm to search the optimal background coefficient \cite{Zhao2012Using}.
    On the other hand, by using the trapezoid formula in each closed interval $[t_{k-1}, t_{k}]$, the discrete-time equation \eqref{eq:02} is modified to be the following form
    \[
    x(t_k)=a \left[\lambda_k y(t_{k-1})+(1-\lambda_k) y(t_{k})\right]+b
    \]
    where $\lambda_k\in[0,1]$, $k=2,3,\cdots,n$. In each closed interval $[t_{k-1}, t_{k}]$, the background coefficient $\lambda_k$ is adaptively computed by using the triangular membership function rule
    \cite{Li2009An,Li2012Forecasting,chang2015novel} or the exponential function rule \cite{shih2011grey}.
    \item [(ii)]
    The initial value was set to $\hat{y}(t_1)=x(t_1)$ in the classical research \cite{Liu2017Grey}. Therefore, based on the intuition that new observations are likely to be more informative than historical ones, an alternative strategy for the initial value selection is $\hat{y}(t_n)=y(t_n)$ \cite{Dang2002The}. On the other hand, the least-square strategy \cite{Xu2011Improvement} was proposed to further reduce the fitting error:
    \[
    \hat{\eta}=\mathop{\arg\min}_{\eta} \sum_{k=1}^{n} \left( y(t_k)-\hat{y}(t_k) \right)^2
    \]
    where $\hat{y}(t_k)$ is calculated according to the time response equation \eqref{eq:05} which has estimated structural parameters but unknown initial value.
    \item [(iii)]
    Combining the background coefficient optimization and initial value selection together is a straightforward approach to improve modelling accuracy. For example, a hybrid method combined an explicit formula for background coefficient calculation and the least-square strategy for initial value selection \cite{Wang2014Optimization}.
\end{enumerate}

In addition, another research focus of GM(1,1) model is the property analysis, such as the necessary and sufficient condition for modelling \cite{Chen2013The}, the error bound estimation among the variants of GM(1,1) model \cite{Liu2014Error}, the influence of data transformation on modelling accuracy \cite{Tien2009A,Xiao2014The} and the effect of sample size on modelling performance \cite{Yao2009On, Wu2013The}.

\subsection{The extended grey forecasting models}\label{sec:2-3}

Since the procedures of GM(1,1) provide a clear paradigm for grey modelling, \citet{Guo2009Random} probed into the mathematical principles for model extensions but this is short of details. A rough classification of the existing outputs shows that the extended models are mainly focused on the following aspects: (i) extending the linear model into the nonlinear ones; (ii) extending the signal-variable model into the multi-variable and multi-output ones.

First, the extended single-variable models have a similar modelling procedures with GM(1,1) model, i.e., modelling the Cusum series with ordinary differential equations. The main differences lie in the continuous-time and discrete-time equations. Table \ref{tbl:1} shows the linear extensions have similar expressions, excluding the forcing terms in differential equations.

\begin{table}[!htb]
    \centering
    \begin{threeparttable}
    \caption{The coupled equations of the extended single-variable linear and nonlinear grey forecasting models.}
    \label{tbl:1}
    \begin{tabular}{l l l l l}
    \toprule
       Type  & Model & Differential equation & Difference equation & Ref \\ \hline
            Linear  & GM(1,1) & $\dfrac{d}{dt} y(t)=a y(t)+b$
                        & $x(k)=a \frac{y(k-1)+y(k)}{2}+b$ & \cite{Liu2017Grey} \\
                   & NGM(1,1,$k$) & $\dfrac{d}{dt} y(t)=a y(t)+b{t}$
                        & $x(k)=a \frac{y(k-1)+y(k)}{2}+bk$ & \cite{Cui2013A} \\
                   & NGM(1,1,$k$,$c$) & $\dfrac{d}{dt} y(t)=a y(t)+b_1{t}+b_0$
                        & $x(k)=a \frac{y(k-1)+y(k)}{2}+b_1 k+b_0$ & \cite{chen2014foundation}\\
                   & GM(1,1,$t^\alpha$) & $\dfrac{d}{dt} y(t)=a y(t)+b_1{t^\alpha}+b_0$
                        & $x(k)=a \frac{y(k-1)+y(k)}{2}+b_1 k^\alpha+b_0$ & \cite{ding2021application} \\ 
                   & GPM(1,1,$N$) & $\dfrac{d}{dt} y(t)=a y(t)+\sum\limits_{j=0}^{N}b_j{t^j}$
                        & $x(k)=a \frac{y(k-1)+y(k)}{2}+\sum\limits_{j=0}^{N}b_j\frac{k^{j+1}-(k-1)^{j}}{j+1}$ & \cite{Luo2017Grey} \\
                   & KRNGM(1,1) & $\dfrac{d}{dt} y(t)=a y(t)+\sum\limits_{j=0}^{N}b_j\phi_j(t)$
                        &  $x(k)=a \frac{y(k-1)+y(k)}{2}+\sum\limits_{j=0}^{N}b_j\frac{\phi_j(k)+\phi_j(k-1)}{2}$ & \cite{Ma2016A} \\
           \multirow{1}*{\shortstack{Nonlinear}}
                & Verhulst & $\dfrac{d}{dt} y(t)=a y(t)+b\left[ y(t) \right]^2$
                        & $x(k)=a\frac{y(k-1)+y(k)}{2}+b\left(\frac{y(k-1)+y(k)}{2}\right)^2$ & \cite{evans2014alternative} \\
                & NBGM(1,1) & $\dfrac{d}{dt} y(t)=a y(t)+b\left[ y(t) \right]^\gamma$
                        & $x(k)=a\frac{y(k-1)+y(k)}{2}+b\left(\frac{y(k-1)+y(k)}{2}\right)^\gamma$ & \cite{chen2008forecasting} \\
    \bottomrule
    \end{tabular}
    \begin{tablenotes}
        \item [~] Note that the time interval in these models is set to unit, i.e., $t_k=k$, $k=1,2,\cdots,n$.
    \end{tablenotes}
    \end{threeparttable}
\end{table}

Table \ref{tbl:1} shows that the extensions have an inclusion relation, i.e., the former is a special case of the latter in the linear and nonlinear cases, respectively. In particular, both GPM(1,1,$N$) and KRNGM(1,1) are extensions from a basis expansion perspective; the former uses the polynomial basis function, while the latter utilizes the kernel function. The Grey Verhulst model, aimed at fitting the inverted U-shaped series, is the first nonlinear extension, and then the power coefficient is generalized from the fixed integer 2 to a real number, in order to improve the flexibility in NBGM(1,1) model \cite{yang2021integral}.

Second, the multi-variable extensions corresponding to the single-variable models were proposed successively. The first multi-variable extension GM(1,$N$) plays an important role in establishing the multi-variable models, even though it is an incomplete extension of GM(1,1)  \cite{Liu2017Grey}; then on this basis, \citet{tien2012research} came to the complete extension by introducing a control parameter. Subsequently, researchers proposed some other multi-variable models including the NGM(1,$N$) \cite{zeng2016development} which is the extension of NGM(1,1,$k$,$c$), the KGM(1,$N$) \cite{ma2018kernel} which is the extension of KRNGM(1,1), and the NBGM(1,$N$) \cite{wang2017forecasting,yu2021novel} which is the extension of NBGM(1,1) model.

Third, further extension of multi-variable models leads to the multi-output ones, but the literature is generally sparse on this up to now. To the best of our knowledge, these mainly include the linear MGM(1,$N$) model \cite{xiong2017mgm} associated with its improvement coupled with self-memory \cite{guo2015multi}, and the nonlinear MNBGM(1,$N$) model coupled with self-memory \cite{guo2019prediction}, as well as the nonlinear grey Lotka--Volterra model \cite{wu2012grey,wang2016application}.

Finally, in a similar manner to the research route of the GM(1,1) model, there have been a variety of optimization studies on the aforementioned three categories of extension models, such as the background coefficient and/or initial value optimization for (i) the single-variable cases such as the GPM(1,1,$N$) model \cite{wei2018optimal} and the NBGM(1,1) model \cite{ding2021novel} and (ii) the multi-variable cases such as the NGM(1,$N$) model \cite{Zeng2018Improved}.

\subsection{Motivation}

Having reviewed a wide range of grey forecasting models in the previous introductory section, the next sections will address, in detail, certain challenging issues. The main objectives of this paper are as follows:
\begin{enumerate}
    \item [(i)]
    There exist a variety of grey forecasting models with different model representations (see Table \ref{tbl:1} and the multi-variable and multi-output models that follow), making it difficult for researchers to perform property analysis; and for practitioners to select an appropriate model for a given practical problem.
    \item [(ii)]
    The Cusum operator is a fundamental element in the grey forecasting model but, until now, the mechanism has not been explained clearly, especially from a mathematical perspective.
    \item [(iii)]
    Forecasting the future values of the original time series is the main goal of a grey model, so is it possible to model the original time series directly but achieve the same performance without explicit use of Cusum operator? In this way, the modelling procedures will be simplified because it avoids the inverse Cusum operator when calculating the forecasts of the original time series, thus making the modelling results easier to explain.
\end{enumerate}

The principal contributions of the present research are as follows:
\begin{enumerate}
    \item [(i)]
    The aforementioned linear grey forecasting models, together with their multi-variable and multi-output extensions, are unified by a matrix differential equation, after extending the Cusum operator for single-variable time series to one that is suitable for vector time series.
    \item [(ii)]
    By introducing an integral operator, the Cusum operator proves to be the Euler's formula-based numerical discretization of this operator; and, subsequently, the mechanism of the Cusum operator is further explained within the parameter estimation process.
    \item [(iii)]
    The unified grey forecasting model for Cusum series proves to be equivalent to a reduced-order ordinary differential equation for the original series. This provides the basis for reconstructing grey forecasting models using an integral matching-based ordinary differential equation framework, in which the differential equation is first converted to an integral equation; and then the structural parameter and initial value are estimated simultaneously.
\end{enumerate}

The remaining parts of the paper are organized as follows:
in section \ref{sec:3}, we extend the Cusum operator, propose a unified representation, reconstruct the modelling process using matrix analysis, and discuss some special cases of the unified model;
in section \ref{sec:4}, we introduce the integral operator to explain the Cusum mechanism, obtain the equivalent reduced-order differential equation, and present integral matching as a method for simultaneously estimating the structural parameter and initial value;
in section \ref{sec:a-1}, we analyze the relationship between the grey model and the integral matching-based one, then reconstruct the modelling procedures of the grey model;
section \ref{sec:5} provides large-scale simulations that compare the two approaches from both parameter estimation and forecasting accuracy viewpoints;
section \ref{sec:6} provides a real-world example to show how to apply our method; while section \ref{sec:7} concludes the work and discusses the future directions of research and development.

\section{Unified representation of grey forecasting models}\label{sec:3}

In this section we present a universal framework for the existing linear grey forecasting models, including structural parameter estimation and initial value selection strategies.

For a state vector $\bm{x}(t)\in \mathbb{R}^d$, the observational data sampled at time instants $t_1$, $t_2$, $\cdots$, $t_n$ are arranged into the following matrix form:
\[
   \begin{bmatrix}
        \bm{x}^\mathsf{T}(t_1) \\
        \bm{x}^\mathsf{T}(t_2) \\
        \vdots \\
        \bm{x}^\mathsf{T}(t_n)
    \end{bmatrix}
    =
   \begin{bmatrix}
        x_1(t_1) & x_2(t_1) & \cdots & x_d(t_1) \\
        x_1(t_2) & x_2(t_2) & \cdots & x_d(t_2) \\
        \vdots & \vdots & \vdots & \vdots \\
        x_1(t_n) & x_2(t_n) & \cdots & x_d(t_n)
    \end{bmatrix}
\]
and subsequently the corresponding cumulative sum form is expressed as
\[
    \begin{bmatrix}
        \bm{y}^\mathsf{T}(t_1) \\
        \bm{y}^\mathsf{T}(t_2) \\
        \vdots \\
        \bm{y}^\mathsf{T}(t_n)
    \end{bmatrix}
    =
    \begin{bmatrix}
        y_1(t_1) & y_2(t_1) & \cdots & y_d(t_1) \\
        y_1(t_2) & y_2(t_2) & \cdots & y_d(t_2) \\
        \vdots & \vdots & \vdots & \vdots \\
        y_1(t_n) & y_2(t_n) & \cdots & y_d(t_n)
    \end{bmatrix}
\]
where the Cusum column vector is calculated according to the formula  $\bm{y}(t_k)=\sum_{i=1}^{k} h_i \bm{x}(t_i)$ which is the multi-variable extension of the single-variable Cusum operator in Definition \ref{def:01}.

For the multi-variable Cusum series, consider the grey forecasting models involving the ordinary differential equation
\begin{equation}\label{eq:gm}
    \frac{d}{dt} \bm{y}(t)=\bm{A}\bm{y}(t)+\bm{B}\bm{u}(t)+\bm{c}, ~ t \geq t_1
\end{equation}
where $\bm{y}(t) \in \mathbb{R}^d$ is the Cusum vector, $\bm{u}(t)\in\mathbb{R}^p$ is a known vector (that is independent of $\bm{y}(t)$), $\bm{A}\in \mathbb{R}^{d \times d}$, $\bm{B}\in \mathbb{R}^{d \times p}$ and $\bm{c}\in \mathbb{R}^{d}$ are unknown structural parameters.

First, numerical discretization-based gradient matching is employed to estimate the structural parameters in equation \eqref{eq:gm}. This method can be divided into the following two steps: in the first step, using the trapezoid formula gives the discrete-time equation
\begin{equation}\label{eq:08}
    \frac{\bm{y}(t_k)-\bm{y}(t_{k-1})}{t_k-t_{k-1}}
    =\bm{A}\frac{\bm{y}(t_{k-1})+\bm{y}(t_{k})}{2}+
        \bm{B}\frac{\bm{u}(t_{k-1})+\bm{u}(t_{k})}{2}+\bm{c}+\mathcal{O}(h^2)
\end{equation}
where $h=\max\nolimits_{k=2}^{n}\{h_k\}$.
Realistically, often only $\bm{x}(t)$ is available and contaminated with noise. Thus, $\bm{y}(t)$ is contaminated with noise and equation \eqref{eq:08} does not hold exactly. By substituting $\bm{x}(t_k)$ for $\frac{\bm{y}(t_k)-\bm{y}(t_{k-1})}{t_k-t_{k-1}}$, the expression \eqref{eq:08} can be rewritten as
\begin{equation}\label{eq:09}
    \bm{x}(t_k)=\bm{A}\frac{\bm{y}(t_{k-1})+\bm{y}(t_{k})}{2}+
        \bm{B}\frac{\bm{u}(t_{k-1})+\bm{u}(t_{k})}{2}+\bm{c}+\bm{\varepsilon}(t_k)
\end{equation}
where $\bm{\varepsilon}(t_k)$ is the sum of discretization error and noise error. By substituting $k=2,3,\cdots,n$ into equation \eqref{eq:09} and arranging the resulting $n-1$ equations into a matrix form, one has
\begin{equation}\label{eq:10}
    \bm{X}=\bm{\Theta}(\bm{y}, \bm{u}) \bm{\Xi}+\bm{\Gamma}
\end{equation}
where
\[
    \bm{\Xi}=
        \begin{bmatrix}
            \bm{A}^\mathsf{T} \\
            \bm{B}^\mathsf{T} \\
            \bm{c}^\mathsf{T}
        \end{bmatrix}, ~
    \bm{X}=
        \begin{bmatrix}
            \bm{x}^\mathsf{T}(t_2) \\
            \bm{x}^\mathsf{T}(t_3) \\
            \vdots \\
            \bm{x}^\mathsf{T}(t_n)
        \end{bmatrix},~
    \bm{\Theta}(\bm{y}, \bm{u})=
        \begin{bmatrix}
            \frac{\bm{y}^\mathsf{T}(t_1)+\bm{y}^\mathsf{T}(t_2)}{2} & \frac{\bm{u}^\mathsf{T}(t_1)+\bm{u}^\mathsf{T}(t_2)}{2} & 1 \\
            \frac{\bm{y}^\mathsf{T}(t_2)+\bm{y}^\mathsf{T}(t_3)}{2} & \frac{\bm{u}^\mathsf{T}(t_2)+\bm{u}^\mathsf{T}(t_3)}{2} & 1 \\
            \vdots & \vdots & \vdots \\
            \frac{\bm{y}^\mathsf{T}(t_{n-1})+\bm{y}^\mathsf{T}(t_n)}{2} & \frac{\bm{u}^\mathsf{T}(t_{n-1})+\bm{u}^\mathsf{T}(t_n)}{2} & 1
        \end{bmatrix},~
    \bm{\Gamma}=
        \begin{bmatrix}
            \bm{\varepsilon}^\mathsf{T}(t_2) \\
            \bm{\varepsilon}^\mathsf{T}(t_3) \\
            \vdots \\
            \bm{\varepsilon}^\mathsf{T}(t_n)
        \end{bmatrix}.
\]

Then, in a second step, we obtain the numerical discretization-based estimate $\hat{\bm{\Xi}}$ of $\bm{\Xi}$ by minimizing the least-squares criterion
\begin{equation}\label{eq:11}
    \mathcal{L}(\bm{\Xi})
    =\left\|\bm{X}-\bm{\Theta}(\bm{y}, \bm{u}) \bm{\Xi}\right\|_\mathsf{F}^2
\end{equation}
where $\left\|\cdot\right\|_\mathsf{F}$ is the Frobenius norm.
The condition for this is obtained in the usual manner by partially differentiating the objective function $\mathcal{L}(\bm{\Xi})$ with respect to the parameter matrix, in turn, and setting the derivative to zero. This yields the matrix equation
\begin{align*}
    \frac{\partial}{\partial \bm{\Xi}} \mathcal{L}(\bm{\Xi})
        = \frac{\partial}{\partial \bm{\Xi}} \mathsf{Tr}
            \left(\left[\bm{X}-\bm{\Theta}(\bm{y}, \bm{u}) \bm{\Xi}\right]^\mathsf{T} \left[\bm{X}-\bm{\Theta}(\bm{y}, \bm{u}) \bm{\Xi}\right] \right)
        = 2\bm{\Theta}^\mathsf{T}(\bm{y},\bm{u})\bm{\Theta}(\bm{y},\bm{u})\bm{\Xi}-
            2\bm{\Theta}^\mathsf{T}(\bm{y},\bm{u})\bm{X}
        = 0
\end{align*}
and the subsequent least-square estimate
\begin{equation}\label{eq:12}
    \hat{\bm{\Xi}}
    =\left(\bm{\Theta}^\mathsf{T}(\bm{y},\bm{u})\bm{\Theta}(\bm{y},\bm{u})\right)^{-1}
    \bm{\Theta}^\mathsf{T}(\bm{y},\bm{u})\bm{X}.
\end{equation}

Next, we solve equation \eqref{eq:gm} using the estimated parameters. By employing the variation of parameters method, one has the general solution (also called time response function) to the non-homogeneous equation \eqref{eq:gm} expressed as
\begin{equation}\label{eq:13}
    \breve{\bm{y}}(t)=
        \exp\left(\hat{\bm{A}}(t-t_1)\right)
        \left\{\bm{\eta}+\int_{t_1}^{t}\exp\left(\hat{\bm{A}}(t_1-s)\right)\left[\hat{\bm{B}}\bm{u}(s)+\hat{\bm{c}}\right]ds\right\}
\end{equation}
where $\breve{\bm{y}}(t_1)=\bm{\eta} \in \mathbb{R}^d$ is the unknown initial vector. In order to forecast the evolutions, the required initial vector $\bm{\eta}$ is always determined by using one of the following strategies:
\begin{enumerate}
    \item [(i)]
        the fixed first point strategy assuming
            \[
                \breve{\bm{y}}(t)=\bm{y}(t_1)
                \quad \Rightarrow \quad
                \bm{\eta}=\bm{y}(t_1),
            \]
        which leads to the time response function
        \begin{equation}\label{eq:14}
            \hat{\bm{y}}(t)=
            \exp\left(\hat{\bm{A}}(t-t_1)\right)
            \left\{\bm{y}(t_1)+\int_{t_1}^{t}\exp\left(\hat{\bm{A}}(t_1-s)\right)\left[\hat{\bm{B}}\bm{u}(s)+\hat{\bm{c}}\right]ds\right\};
        \end{equation}
    \item [(ii)]
        the fixed last point strategy assuming
            \[
                \breve{\bm{y}}(t_n)=\bm{y}(t_n)
                \quad \Rightarrow \quad
                \bm{\eta}=\exp\left(\hat{\bm{A}}(t_1-t_n)\right)\bm{y}(t_n)-\int_{t_1}^{t_n}\exp\left(\hat{\bm{A}}(t_1-s)\right)\left[\hat{\bm{B}}\bm{u}(s)+\hat{\bm{c}}\right]ds,
            \]
        which leads to the time response function
        \begin{equation}\label{eq:15}
            \hat{\bm{y}}(t)=
            \exp\left(\hat{\bm{A}}(t-t_1)\right)
            \left\{
                \exp\left(\hat{\bm{A}}(t_1-t_n)\right)\bm{y}(t_n) + \int_{t_n}^{t}\exp\left(\hat{\bm{A}}(t_1-s)\right)\left[\hat{\bm{B}}\bm{u}(s)+\hat{\bm{c}}\right]ds
            \right\};
        \end{equation}
    \item [(iii)]
        the least-squares strategy minimizing the least-square objective function
        \[
            \breve{\bm{y}}(t_1) =
            \hat{\bm{\eta}} = \mathop{\arg\min}_{\bm{\eta}}
            \left\{
                \mathcal{L}(\bm{\eta})=\sum_{k=1}^{n}\left\| \bm{y}(t_k)-\breve{\bm{y}}(t_k) \right\|_2^2
            \right\},
        \]
        which subsequently leads to the time response function
        \begin{equation}\label{eq:16}
            \hat{\bm{y}}(t)=
            \exp\left(\hat{\bm{A}}(t-t_1)\right)
            \left\{\hat{\bm{\eta}}+\int_{t_1}^{t}\exp\left(\hat{\bm{A}}(t_1-s)\right)\left[\hat{\bm{B}}\bm{u}(s)+\hat{\bm{c}}\right]ds\right\}.
        \end{equation}
\end{enumerate}

Equations \eqref{eq:14}, \eqref{eq:15} and \eqref{eq:16} show the most commonly-used strategies to determine the initial vector. Finally, substituting the time points $\left\{t_k\right\}_{k=1}^{n+r}$ into equation \eqref{eq:14}, \eqref{eq:15} or \eqref{eq:16} yields the fitting and forecasting values of the Cusum series and then, using the inverse Cusum operator, we obtain the fitting and forecasting results of the original multi-variable time series:
\begin{equation}\label{eq:06}
    \hat{\bm{x}}(t_k)=\begin{cases}
        \hat{\bm{y}}(t_1), & k=1 \\
        \frac{1}{t_k-t_{k-1}}\left[ \hat{\bm{y}}(t_k)-\hat{\bm{y}}(t_{k-1}) \right], & k=2,3,\cdots,n+r \\
    \end{cases}.
\end{equation}

Overall, the modelling procedures of grey forecasting models can be summarized easily from the above process. Then, we are able to analyze this unified model from three different scenarios including the single-variable, multi-variable and multi-output ones.

\begin{remark}\label{rem:1}
    Let the dimension of the state vector be $d=1$. If $\bm{u}(t)=\begin{bmatrix} u_\iota(t) \end{bmatrix}_{p\times1}$ consists of the fixed basis functions of time, then the unified model yields multiple families of single-variable grey forecasting models.
    \begin{enumerate}
        \item [(1)]
             For the family of grey polynomial models, the basis functions are $u_\iota(t)=t^\iota$, $\iota=1,2,\cdots,p$. By inserting $u_\iota(t)$ back into equations \eqref{eq:gm} and \eqref{eq:09}, we obtain the continuous- and discrete-time equations respectively expressed as
             \[
                \frac{d}{dt}y(t)=ay(t)+\sum\limits_{\iota=1}^{p}{b_\iota t^\iota}+c
             \]
             and
             \[
                x(t_k)=a\frac{y(t_{k-1})+y(t_k)}{2}+\sum\limits_{\iota=1}^{p}{b_\iota \frac{t_{k-1}^\iota+t_{k}^\iota}{2}}+c+\varepsilon(t_k).
             \]
        \item [(2)]
            For the family of grey Fourier models, the basis functions are
            $u_{2\iota-1}(t)=\sin\left(2\iota\pi ft\right)$ and  $u_{2\iota}(t)=\cos\left({2\iota\pi}ft\right)$, $\iota=1,2,\cdots,\frac{p}{2}$, where $p$ is even and $f$ is frequency. By inserting $u_{2\iota-1}(t)$ and $u_{2\iota}(t)$ back into equations \eqref{eq:gm} and \eqref{eq:09}, we obtain the continuous- and discrete-time equations respectively expressed as
            \[
                \frac{d}{dt}y(t)=ay(t)+
                    \sum\limits_{\iota=1}^{p/2} {b_{2\iota-1}\sin\left(2{\iota}\pi f t\right)}
                   +\sum\limits_{\iota=1}^{p/2} {b_{2\iota}\cos\left(2{\iota}\pi f t\right)}+c
            \]
            and
            \begin{multline*}
            x(t_k)=a\frac{y(t_{k-1})+y(t_k)}{2}+
                \sum\limits_{\iota=1}^{p/2} b_{2\iota-1}
                        \frac{\sin\left(2{\iota}\pi f t_{k-1}\right)+\sin\left(2{\iota}\pi f t_{k}\right)}{2} \\
               +\sum\limits_{\iota=1}^{p/2} b_{2\iota}
                        \frac{\cos\left(2{\iota}\pi f t_{k-1}\right)+\cos\left(2{\iota}\pi f t_{k}\right)}{2}
                 +c+\varepsilon(t_k).
            \end{multline*}
    \end{enumerate}
\end{remark}

Remark \ref{rem:1} shows that this unified model subsumes a number of special families of grey models, although here only two of them are exemplified. Furthermore, each family also subsumes many specific models that have been researched and received specific names. For instance,  all the linear models in Table \ref{tbl:1} belong to the grey polynomial model family and, additionally, the grey Fourier model family \cite{comert2021improved} covers the hybrid grey models which uses the Fourier series fitting to modify the residual errors of GM(1,1) model \cite{lin2007novel, lin2009adaptive, kayacan2010grey, xiong2014optimal}.

\begin{remark}\label{rem:2}
    Let the dimension of the state vector be $d=1$. If the vector $\bm{u}(t)=\begin{bmatrix} u_\iota(t) \end{bmatrix}_{p\times1}$ consists of other state variables, i.e., $u_\iota(t)=y_\iota(t)$, $\iota=1, 2, \cdots, p$, then the unified model yields the multi-variable grey forecasting model \cite{tien2012research,kung2008prediction}, whose continuous- and discrete-time equations are
    \[
        \frac{d}{dt}y(t)=ay(t)+\sum\limits_{\iota=1}^{p}{b_\iota y_\iota(t)}+c
    \]
    and
    \[
        x(t_k)=a\frac{y(t_{k-1})+y(t_k)}{2}+
            \sum\limits_{\iota=1}^{p}{b_\iota \frac{y_\iota(t_{k-1})+y_\iota(t_{k})}{2}}+c+\varepsilon(t_k).
    \]
\end{remark}

If the vector $\bm{u}(t)=\begin{bmatrix} u_\iota(t) \end{bmatrix}_{p\times1}$ consists of a mixture of other state variables and known fixed basis functions of time, then combining the extension principle in Remark \ref{rem:1} gives
\[
    \frac{d}{dt}y(t)=ay(t)+
        \sum\limits_{\iota=1}^{p_1}{b_\iota y_\iota(t)}+
        \sum\limits_{\iota=1}^{p_2}{b_\iota u_\iota(t)}+c
\]
and
\[
    x(t_k)=a\frac{y(t_{k-1})+y(t_k)}{2}+
        \sum\limits_{\iota=1}^{p_1}{b_\iota \frac{y_\iota(t_{k-1})+y_\iota(t_{k})}{2}}+
        \sum\limits_{\iota=1}^{p_2}{b_\iota \frac{u_\iota(t_{k-1})+u_\iota(t_{k})}{2}}+ c+\varepsilon(t_k),
\]
which shows that this model yields many multi-variable grey forecasting models, such as the aforementioned NGM(1,$N$) model \cite{zeng2016development} ($p_1=N-1$, $p_2=1$ and $u_1(t)=t$). This model may also lead to many other novel multi-variable grey forecasting models.

\begin{remark}\label{rem:3}
    Let the dimension of the state vector be $d\geq 2$. If the vector $\bm{u}(t)=\begin{bmatrix} u_\iota(t) \end{bmatrix}_{p\times1}$ is equal to $\bm{0}$, i.e., $u_\iota(t)=0$, $\iota=1, 2, \cdots, p$, then the unified model yields the multi-output grey forecasting model \cite{xiong2017mgm,xiong2020examination}, whose continuous- and discrete-time equations are
    \[
    \begin{cases}
        \frac{d}{dt} y_1(t) = \sum\limits_{\iota=1}^{d} a_{1,\iota}y_{\iota}(t)+c_{1} \\
        \frac{d}{dt} y_2(t) = \sum\limits_{\iota=1}^{d} a_{2,\iota}y_{\iota}(t)+c_{2} \\
        \quad \quad \quad ~ \vdots \\
        \frac{d}{dt} y_d(t) = \sum\limits_{\iota=1}^{d} a_{d,\iota}y_{\iota}(t)+c_{d} \\
    \end{cases}
    \]
    and
    \[
    \begin{cases}
        x_1(t_k) = \sum\limits_{\iota=1}^{d} a_{1,\iota} \frac{y_{\iota}(t_{k-1})+y_{\iota}(t_{k})}{2}+c_{1}+\varepsilon_1(t_k) \\
        x_2(t_k) = \sum\limits_{\iota=1}^{d} a_{2,\iota} \frac{y_{\iota}(t_{k-1})+y_{\iota}(t_{k})}{2}+c_{2}+\varepsilon_2(t_k) \\
        \quad \quad \quad ~ \vdots \\
        x_d(t_k) = \sum\limits_{\iota=1}^{d} a_{d,\iota} \frac{y_{\iota}(t_{k-1})+y_{\iota}(t_{k})}{2}+c_{d}+\varepsilon_d(t_k) \\
    \end{cases}.
    \]
\end{remark}

Similar to the extensions in Remarks \ref{rem:1} and \ref{rem:2}, if the vector $\bm{u}(t)=\begin{bmatrix} u_\iota(t) \end{bmatrix}_{p\times1}$ consists of fixed functions of time, then the continuous- and discrete-time equations are expressed as
    \[
    \begin{cases}
        \frac{d}{dt} y_1(t) = \sum\limits_{\iota=1}^{d} a_{1,\iota}y_{\iota}(t)+\sum\limits_{\iota=1}^{p}b_{1,\iota}u_{\iota}(t)+c_{1} \\
        \frac{d}{dt} y_2(t) = \sum\limits_{\iota=1}^{d} a_{2,\iota}y_{\iota}(t)+\sum\limits_{\iota=1}^{p}b_{2,\iota}u_{\iota}(t)+c_{2} \\
        \quad \quad \quad ~ \vdots \\
        \frac{d}{dt} y_d(t) = \sum\limits_{\iota=1}^{d} a_{d,\iota}y_{\iota}(t)+\sum\limits_{\iota=1}^{p}b_{d,\iota}u_{\iota}(t)+c_{d} \\
    \end{cases}
    \]
    and
    \[
    \begin{cases}
        x_1(t_k) = \sum\limits_{\iota=1}^{d} a_{1,\iota} \frac{y_{\iota}(t_{k-1})+y_{\iota}(t_{k})}{2}+
        \sum\limits_{\iota=1}^{p} b_{1,\iota}\frac{u_{\iota}(t_{k-1})+u_{\iota}(t_{k})}{2}+c_{1}+\varepsilon_1(t_k) \\
        x_2(t_k) = \sum\limits_{\iota=1}^{d} a_{2,\iota} \frac{y_{\iota}(t_{k-1})+y_{\iota}(t_{k})}{2}+
        \sum\limits_{\iota=1}^{p} b_{2,\iota}\frac{u_{\iota}(t_{k-1})+u_{\iota}(t_{k})}{2}+c_{2}+\varepsilon_2(t_k) \\
        \quad \quad \quad ~ \vdots \\
        x_d(t_k) = \sum\limits_{\iota=1}^{d} a_{d,\iota} \frac{y_{\iota}(t_{k-1})+y_{\iota}(t_{k})}{2}+
        \sum\limits_{\iota=1}^{p} b_{d,\iota}\frac{u_{\iota}(t_{k-1})+u_{\iota}(t_{k})}{2}+c_{d}+\varepsilon_d(t_k) \\
    \end{cases}
    \]
which can not only yield the existing multi-output models, such as the ones coupling self-memory theory \cite{guo2015multi} and multiple regression \cite{xiong2011combined}, but suggest other novel ones.

Remarks \ref{rem:1}--\ref{rem:3} show that the unified equation \eqref{eq:gm} has the capacity to represent the continuous-time single-variable, multi-variable and multi-output grey models; and furthermore, it can also introduce the possibility of developing some other novel models.

\section{Integral matching-based ordinary differential equation models}\label{sec:4}

In this section we deduce the reduced-order representation of the aforementioned unified form; then the integral matching, which is composed of integral operator and least squares  \cite{dattner2015optimal,dattner2015model,wei2019understanding}, is introduced to explain the mechanism of the cumulative sum operator as well as estimating the structural parameters and initial values simultaneously.

\subsection{Reduced-order representation of grey forecasting models}

\begin{lemma}\label{lem:01}
    Let the translation transformation of the Cusum series be $\bm{y}_\mathrm{tr}(t_k)=\bm{y}(t_k)+\bm{\xi}$, $k=1,2,\cdots,n$, where $\bm{\xi}\in \mathbb{R}^d$ is a real vector. Then,
        (i) the estimated parameters satisfy $\hat{\bm{A}}_\mathrm{tr}=\hat{\bm{A}}$, $\hat{\bm{B}}_\mathrm{tr}=\hat{\bm{B}}$, $\hat{\bm{c}}_\mathrm{tr}=\hat{\bm{c}}-\hat{\bm{A}}\bm{\xi}$;
        (ii) the time response functions satisfy $\hat{\bm{y}}_\mathrm{tr}(t)=\hat{\bm{y}}(t)+\bm{\xi}$;
        (iii) the fitting and forecasting values of the original time series satisfy $\hat{\bm{x}}_\mathrm{tr}(t_k)=\hat{\bm{x}}(t_k)$, $k=2,3,\cdots,n$.
\end{lemma}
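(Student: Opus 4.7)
The overall approach is to exhibit a bijection on the parameter space under which the translation $\bm{y}\mapsto\bm{y}_{\mathrm{tr}} = \bm{y}+\bm{\xi}$ leaves the entire modelling pipeline invariant. The starting observation is that if $\bm{y}(t)$ satisfies equation \eqref{eq:gm} with parameters $(\bm{A},\bm{B},\bm{c})$, then $\bm{A}\bm{y} = \bm{A}(\bm{y}_{\mathrm{tr}}-\bm{\xi})$ shows $\bm{y}_{\mathrm{tr}}$ satisfies the same ODE with $\bm{c}$ replaced by $\bm{c}-\bm{A}\bm{\xi}$. Hence one expects the estimator and the solver to transport along the affine map $\Phi:(\bm{A},\bm{B},\bm{c})\mapsto(\bm{A},\bm{B},\bm{c}-\bm{A}\bm{\xi})$.

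For claim (i) I would first observe that the response matrix $\bm{X}$ in equation \eqref{eq:10} is unchanged under the translation, because its rows involve $\bm{x}(t_k)$ for $k\geq 2$, which by the inverse Cusum formula \eqref{eq:06} depends on $\bm{y}$ only through the invariant differences $\bm{y}(t_k)-\bm{y}(t_{k-1})$. Then I would expand $\bm{\Theta}(\bm{y}_{\mathrm{tr}},\bm{u})$ and verify row by row the identity
\[
    \bm{\Theta}(\bm{y}_{\mathrm{tr}},\bm{u})\,\Phi(\bm{\Xi}) = \bm{\Theta}(\bm{y},\bm{u})\,\bm{\Xi}.
\]
Since $\Phi$ is an affine bijection on the parameter space, the two least-squares criteria in \eqref{eq:11} agree along $\Phi$; hence their unique minimizers correspond through $\Phi$, which is exactly the content of (i).

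For claim (ii) I would substitute the parameter relations from (i) into the time response \eqref{eq:13} and compute the extra contribution to the forcing integral. The crucial identity, obtained directly from the antiderivative of a matrix exponential, is
\[
    \int_{t_1}^{t}\exp\!\bigl(\hat{\bm{A}}(t_1-s)\bigr)\hat{\bm{A}}\bm{\xi}\,ds
    = \bm{\xi} - \exp\!\bigl(\hat{\bm{A}}(t_1-t)\bigr)\bm{\xi}.
\]
Pre-multiplying by $\exp(\hat{\bm{A}}(t-t_1))$ gives $\exp(\hat{\bm{A}}(t-t_1))\bm{\xi}-\bm{\xi}$. This cancels exactly with a change $\bm{\eta}_{\mathrm{tr}}=\bm{\eta}+\bm{\xi}$ in the initial vector, leaving $\hat{\bm{y}}_{\mathrm{tr}}(t)=\hat{\bm{y}}(t)+\bm{\xi}$. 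The main obstacle is verifying that each of the three initial-value strategies \eqref{eq:14}--\eqref{eq:16} does produce $\bm{\eta}_{\mathrm{tr}}=\bm{\eta}+\bm{\xi}$. For strategy (i) this is immediate. For strategy (ii), substituting the translated data and parameters into the explicit formula for $\bm{\eta}_{\mathrm{tr}}$ and re-using the exponential identity yields $\bm{\eta}_{\mathrm{tr}}=\bm{\eta}+\exp(\hat{\bm{A}}(t_1-t_n))\bm{\xi}+(\bm{I}-\exp(\hat{\bm{A}}(t_1-t_n)))\bm{\xi}=\bm{\eta}+\bm{\xi}$. For strategy (iii), I would appeal to the observation that $\breve{\bm{y}}\mapsto \breve{\bm{y}}+\bm{\xi}$ is a bijection between trajectories of the two systems and the residual $\sum_{k}\|\bm{y}_{\mathrm{tr}}(t_k)-\breve{\bm{y}}_{\mathrm{tr}}(t_k)\|_2^2$ is invariant along this correspondence, so the minimizers correspond.

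Claim (iii) is then an immediate consequence of (ii). Applying the inverse Cusum formula \eqref{eq:06} to $\hat{\bm{y}}_{\mathrm{tr}}(t_k)=\hat{\bm{y}}(t_k)+\bm{\xi}$, the additive constant telescopes out of every difference for $k\geq 2$, giving $\hat{\bm{x}}_{\mathrm{tr}}(t_k)=\hat{\bm{x}}(t_k)$.
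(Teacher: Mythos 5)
Your proof is correct and follows essentially the same route as the paper's: invariance of $\bm{X}$ under the translation, the factorization of the design matrix (your identity $\bm{\Theta}(\bm{y}_{\mathrm{tr}},\bm{u})\,\Phi(\bm{\Xi})=\bm{\Theta}(\bm{y},\bm{u})\,\bm{\Xi}$ is exactly the paper's block lower-triangular factorization of $\bm{\Theta}(\bm{y}_{\mathrm{tr}},\bm{u})$ read through the least-squares normal equations), the matrix-exponential integral identity for the forcing term, and telescoping of the constant for claim (iii). If anything you are slightly more thorough than the paper, which simply asserts $\bm{\eta}_{\mathrm{tr}}=\bm{\eta}+\bm{\xi}$ for whichever initial-value strategy is used, whereas you verify it separately for each of the three.
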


\begin{proof}
    By substituting $\bm{y}_\mathrm{tr}(t_k)=\bm{y}(t_k)+\bm{\xi}$ into the matrices in equation \eqref{eq:10}, the two matrices therein are rewritten as
    \[
        \bm{X}_\mathrm{tr}=\bm{X}
        \text{~ with ~}
        \bm{x}_\mathrm{tr}(t_k)=\bm{x}(t_k)=\frac{1}{h_k}\left[ \bm{y}_\mathrm{tr}(t_k)-\bm{y}_\mathrm{tr}(t_{k-1}) \right]$,~ $k=2,3,\cdots,n
    \]
    and
    \[
    \begin{split}
        \bm{\Theta}(\bm{y}_\mathrm{tr}, \bm{u})
         =  \begin{bmatrix}
                \frac{\bm{y}_\mathrm{tr}^\mathsf{T}(t_1)+\bm{y}_\mathrm{tr}^\mathsf{T}(t_2)}{2} & \frac{\bm{u}^\mathsf{T}(t_1)+\bm{u}^\mathsf{T}(t_2)}{2} & 1 \\
                \frac{\bm{y}_\mathrm{tr}^\mathsf{T}(t_2)+\bm{y}_\mathrm{tr}^\mathsf{T}(t_3)}{2} & \frac{\bm{u}^\mathsf{T}(t_2)+\bm{u}^\mathsf{T}(t_3)}{2} & 1 \\
                \vdots & \vdots & \vdots \\
                \frac{\bm{y}_\mathrm{tr}^\mathsf{T}(t_{n-1})+\bm{y}_\mathrm{tr}^\mathsf{T}(t_n)}{2} & \frac{\bm{u}^\mathsf{T}(t_{n-1})+\bm{u}^\mathsf{T}(t_n)}{2} & 1
            \end{bmatrix}
        =  \bm{\Theta}(\bm{y}, \bm{u})
            \begin{bmatrix}
                \bm{I}_d & \bm{0} & \bm{0} \\
                \bm{0} & \bm{I}_p & \bm{0} \\
                \bm{\xi}^\mathsf{T} & \bm{0} & 1
            \end{bmatrix}
    \end{split}.
    \]

    From equation \eqref{eq:12}, one has the estimated parameters expressed as
    \[
    \begin{bmatrix}
        \hat{\bm{A}}^\mathsf{T}_\mathrm{tr} \\
        \hat{\bm{B}}^\mathsf{T}_\mathrm{tr} \\
        \hat{\bm{c}}^\mathsf{T}_\mathrm{tr}
    \end{bmatrix}
    =\left(\bm{\Theta}^\mathsf{T}(\bm{y}_\mathrm{tr},\bm{u})\bm{\Theta}(\bm{y}_\mathrm{tr},\bm{u})\right)^{-1}
        \bm{\Theta}^\mathsf{T}(\bm{y}_\mathrm{tr},\bm{u})\bm{X}_\mathrm{tr}
    =\begin{bmatrix}
        \bm{I}_d & \bm{0} & \bm{0} \\
        \bm{0} & \bm{I}_p & \bm{0} \\
        \bm{\xi}^\mathsf{T} & \bm{0} & 1 \\
      \end{bmatrix}^\mathrm{-1}
      \begin{bmatrix}
          \hat{\bm{A}}^\mathsf{T} \\
          \hat{\bm{B}}^\mathsf{T} \\
          \hat{\bm{c}}^\mathsf{T}
      \end{bmatrix}
     =\begin{bmatrix}
          \hat{\bm{A}}^\mathsf{T} \\
          \hat{\bm{B}}^\mathsf{T} \\
          \hat{\bm{c}}^\mathsf{T}-\bm{\xi}^\mathsf{T}\hat{\bm{A}}^\mathsf{T}
      \end{bmatrix}.
    \]

    Therefore, the corresponding time response function obtained from equation \eqref{eq:13} is
    \[
    \begin{split}
        \hat{\bm{y}}_\mathrm{tr}(t)
         &= \exp\left(\hat{\bm{A}}(t-t_1)\right)
            \left\{ \bm{\eta}_\mathrm{tr} +\int_{t_1}^{t}
              \exp\left(\hat{\bm{A}}(t_1-s)\right)
              \left[ \hat{\bm{B}}\bm{u}(s)+\hat{\bm{c}} \right]ds - \int_{t_1}^{t}\exp\left(\hat{\bm{A}}(t_1-s)\right)\hat{\bm{A}}\bm{\xi} ds
            \right\}
    \end{split}
    \]
    where, because both $\bm{\eta}_\mathrm{tr}$ and $\bm{\eta}$ are determined by using the same strategy as in equation \eqref{eq:14}, \eqref{eq:15} or \eqref{eq:16}, the initial values satisfy
    \[
        \bm{\eta}_\mathrm{tr}=\bm{\eta}+\bm{\xi}
    \]
     and the integral term at the right hand side is manipulated as
    \[
        -\int_{t_1}^{t}\exp\left(\hat{\bm{A}}(t_1-s)\right)\hat{\bm{A}}\bm{\xi} ds = \exp\left(\hat{\bm{A}}(t_1-t)\right)\bm{\xi}-\bm{\xi}.
    \]

    Finally, a direct substitution and simple manipulation give the time response function
    \[
    \hat{\bm{y}}_\mathrm{tr}(t)
        = \exp\left(\hat{\bm{A}}(t-t_1)\right)
        \left\{ \bm{\eta} +\int_{t_1}^{t}
                \exp\left(\hat{\bm{A}}(t_1-s)\right)
                \left[ \hat{\bm{B}}\bm{u}(s)+\hat{\bm{c}} \right]ds
        \right\}
        +\bm{\xi}
        =\hat{\bm{y}}(t)+\bm{\xi}
    \]
    and the restored fitting and forecasting values
    \[
        \hat{\bm{x}}_\mathrm{tr}(t_k)
        =\frac{\hat{\bm{y}}_\mathrm{tr}(t_k)-\hat{\bm{y}}_\mathrm{tr}(t_{k-1})}{t_k-t_{k-1}}
        =\frac{\hat{\bm{y}}(t_k)-\hat{\bm{y}}(t_{k-1})}{t_k-t_{k-1}}
        =\bm{x}(t_k),~ k=2,3,\cdots,n+r.
    \]

    Lemma \ref{lem:01} shows that the translation transformation of the Cusum series has no influence on the forecasting results. Note that the translation transformation of the Cusum series is equivalent to adding $\bm{\xi}$ to the first element in the original time series ($\bm{X}_\mathrm{tr}(t)=\left\{ \bm{x}(t_1)+\bm{\xi}, \bm{x}(t_2), \cdots, \bm{x}(t_n) \right\}$). Thus, without changing performance, the Cusum operator in Definition \ref{def:01} can be defined also as
    $\bm{y}(t_1)=\bm{\xi}$ and $\bm{y}(t_k)=\bm{\xi}+\sum\nolimits_{i=2}^{k} h_k \bm{x}(t_k)$ when $k\geq 2$.
\end{proof}

\begin{theorem}\label{them:01}
    Let
    \begin{equation}\label{eq:int}
        \bm{y}(t)=\bm{\xi}+\int_{t_1}^{t} \bm{x}(s) ds, ~ t \geq t_1
    \end{equation}
    where $\bm{\xi}\in \mathbb{R}^d$ is a real vector. Then the ordinary differential equation \eqref{eq:gm} with initial value $\bm{y}(t_1)=\bm{\xi}$ is equivalent to the reduced-order ordinary differential equation
    \begin{equation}\label{eq:de}
        \frac{d}{dt} \bm{x}(t)=\bm{A}\bm{x}(t)+\bm{B} \frac{d}{dt} \bm{u}(t), ~ t \geq t_1
    \end{equation}
    with initial value $\bm{x}(t_1)=\bm{A}\bm{\xi}+\bm{B}\bm{u}(t_1)+\bm{c}$.
\end{theorem}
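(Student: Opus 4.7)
The plan is to establish the equivalence in both directions by exploiting the fact that, under the integral definition in \eqref{eq:int}, the fundamental theorem of calculus gives $\frac{d}{dt}\bm{y}(t)=\bm{x}(t)$ and $\bm{y}(t_1)=\bm{\xi}$ automatically. So \eqref{eq:gm} is nothing more than the algebraic identity
\[
    \bm{x}(t)=\bm{A}\bm{y}(t)+\bm{B}\bm{u}(t)+\bm{c}, \quad t\geq t_1,
\]
and the whole proof reduces to showing that this identity is equivalent, under the stated initial conditions, to the reduced-order system \eqref{eq:de}.

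First I would treat the forward direction. Starting from \eqref{eq:gm}, I substitute $\frac{d}{dt}\bm{y}(t)=\bm{x}(t)$ to obtain the algebraic identity above, evaluate at $t=t_1$ (using $\bm{y}(t_1)=\bm{\xi}$) to recover the initial value $\bm{x}(t_1)=\bm{A}\bm{\xi}+\bm{B}\bm{u}(t_1)+\bm{c}$, and then differentiate both sides with respect to $t$ once, using $\frac{d}{dt}\bm{y}(t)=\bm{x}(t)$ again to eliminate $\bm{y}$, giving exactly \eqref{eq:de}. This direction is a one-line computation.

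For the converse I would start from \eqref{eq:de} with the stated initial value and define the auxiliary quantity
\[
    \bm{z}(t):=\bm{x}(t)-\bm{A}\bm{y}(t)-\bm{B}\bm{u}(t)-\bm{c},
\]
where $\bm{y}(t)$ is the integral in \eqref{eq:int}. At $t=t_1$, the initial-value assumption and $\bm{y}(t_1)=\bm{\xi}$ yield $\bm{z}(t_1)=\bm{0}$. Differentiating and using $\frac{d}{dt}\bm{y}(t)=\bm{x}(t)$ together with \eqref{eq:de} gives $\frac{d}{dt}\bm{z}(t)=\frac{d}{dt}\bm{x}(t)-\bm{A}\bm{x}(t)-\bm{B}\frac{d}{dt}\bm{u}(t)=\bm{0}$, so $\bm{z}\equiv\bm{0}$. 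This recovers the algebraic identity, hence \eqref{eq:gm}, and the initial value $\bm{y}(t_1)=\bm{\xi}$ is built into the definition of $\bm{y}$.

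There is no real obstacle here: the statement is essentially a change of variable between an $\bm{x}$-equation and an equation on its antiderivative, and the proof is just the fundamental theorem of calculus plus one differentiation. The only thing worth being careful about is the mild regularity assumption on $\bm{u}$ (it must be differentiable for \eqref{eq:de} to make sense), but this is tacit throughout the paper and need not be belabored in the proof.
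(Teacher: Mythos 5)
Your proposal is correct and takes essentially the same approach as the paper: the forward direction (use $\frac{d}{dt}\bm{y}(t)=\bm{x}(t)$ from \eqref{eq:int} to turn \eqref{eq:gm} into the algebraic identity $\bm{x}(t)=\bm{A}\bm{y}(t)+\bm{B}\bm{u}(t)+\bm{c}$, evaluate at $t_1$ for the initial value, then differentiate once) is the paper's argument verbatim. Your converse, showing that the residual $\bm{z}(t)=\bm{x}(t)-\bm{A}\bm{y}(t)-\bm{B}\bm{u}(t)-\bm{c}$ has vanishing derivative and vanishes at $t_1$, is just a reformulation of the paper's step of integrating \eqref{eq:de} over $[t_1,t]$ and substituting $\int_{t_1}^{t}\bm{x}(s)\,ds=\bm{y}(t)-\bm{\xi}$ together with the stated initial value, so there is no substantive difference and no gap.
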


\begin{proof}
    The proof is divided into two parts. Firstly, we prove that equation \eqref{eq:gm} can be reduced to equation \eqref{eq:de}. By substituting equation \eqref{eq:int} into \eqref{eq:gm}, one has
    \begin{equation}\label{eq:19}
        \bm{x}(t)=\bm{A}\left(\bm{\xi}+\int_{t_1}^{t} \bm{x}(s) ds\right)+\bm{B}\bm{u}(t)+\bm{c}, ~ t \geq t_1.
    \end{equation}

    Differentiating both sides of equation \eqref{eq:19} with respect to $t$ gives the desired equation \eqref{eq:de}; then, combining equation \eqref{eq:int} and the closed-form solution in equation \eqref{eq:13}, gives the initial value expressed as
    \[
        \bm{x}(t_1)=\frac{d}{dt}\bm{y}(t)\Big|_{t=t_1}=\bm{A}\bm{\xi}+\bm{B}\bm{u}(t_1)+\bm{c}.
    \]

    Conversely, integrating both sides of equation \eqref{eq:de} gives
    \begin{equation}\label{eq:im}
        \bm{x}(t)=\bm{A}\int_{t_1}^{t}\bm{x}(s)ds+
        \bm{B}\bm{u}(t)-\bm{B}\bm{u}(t_1)+
        \bm{x}(t_1), ~ t \geq t_1
    \end{equation}
    where, by differentiating both sides of equation \eqref{eq:int} with respect to $t$, the left side is
    \[
        \bm{x}(t)=\frac{d}{dt}\left(\bm{\xi}+\int_{t_1}^{t} \bm{x}(s) ds\right)=\frac{d}{dt}\bm{y}(t), ~ t \geq t_1
    \]
    and by substituting $\bm{y}(t)-\bm{\xi}$ in equation \eqref{eq:int} for $\int_{t_1}^{t}\bm{x}(s)ds$ in equation \eqref{eq:im}, the right side is
    \[
        \bm{A}\int_{t_1}^{t}\bm{x}(s)ds+
        \bm{B}\bm{u}(t)-\bm{B}\bm{u}(t_1)+
        \bm{x}(t_1)
        = \bm{A}\bm{y}(t)+\bm{B}\bm{u}(t)+\bm{x}(t_1)-\bm{A}\bm{\xi}-\bm{B}\bm{u}(t_1).
    \]

    Substituting the initial value $\bm{x}(t_1)=\bm{c}+\bm{B}\bm{u}(t_1)+\bm{A}\bm{\xi}$ into equation \eqref{eq:im} then yields the desired differential equation \eqref{eq:gm}.
\end{proof}

Theorem \ref{them:01} shows that by introducing the integral operator \eqref{eq:int}, equation \eqref{eq:gm} can always be reduced to the simple form \eqref{eq:de} and, in the other direction, equation \eqref{eq:de} can express \eqref{eq:gm} by selecting an appropriate initial value $\bm{x}(t_1)=\bm{c}+\bm{B}\bm{u}(t_1)+\bm{A}\bm{\xi}$. In this sense, equation \eqref{eq:gm} has $d$ redundant degrees of freedom (the parameter vector $\bm{c}$) and it can be eliminated by using the reduced-order equation \eqref{eq:de} to model the original time series directly.
Here, we use an example to show the order reduction process. Supposing the ordinary differential equation \eqref{eq:gm} is
$
    \frac{d}{dt}y(t)=ay(t) + b_2 t^2 + b_1 t + c, ~ y(0)=\xi, ~ t\geq 0
$,
whose closed-form solution is
\[
    y(t) = \left(\xi + \frac{c}{a} + \frac{b_1}{a^2} + \frac{2 b_2}{a^3}\right) \exp(at) -
        \frac{b_2}{a} t^2 -
        \left(\frac{b_1}{a}+\frac{2 b_2}{a^2}\right) t -
        \left(\frac{c}{a} + \frac{b_1}{a^2} + \frac{2 b_2}{a^3}\right), ~ t\geq 0 .
\]
If $y(t)$ satisfies equation \eqref{eq:int}, i.e., $y(t)=\xi+\int_{0}^{t} x(s) ds$, then it is easy to derive that  the equivalent reduced-order equation \eqref{eq:de} is
$
    \frac{d}{dt}x(t)=ax(t) + 2b_2 t + b_1, ~ x(0)=a\xi+c, ~ t\geq 0
$,
whose closed-form solution is
\[
    x(t)=\left( a\xi+c+\frac{b_1}{a}+\frac{2b_2}{a^2} \right)\exp(at)-\frac{2b_2}{a}t-\left( \frac{b_1}{a}+\frac{2b_2}{a^2} \right), ~ t\geq 0.
\]

Since Lemma \ref{lem:01} shows that $\bm{\xi}$ has no influence on modelling performance then, without loss of generality, let $\bm{\xi}=\bm{x}(t_1)$ in order to be consistent with the intuitive understanding of the Cusum operator. In this way, the initial values satisfy
$
    \bm{y}(t_1)=\bm{\xi}=\bm{x}(t_1)
$
and the redundant parameter vector satisfy
$
    \bm{c}=\left(\bm{I}-\bm{A}\right)\bm{y}(t_1)-\bm{B}\bm{u}(t_1).
$
This provides an alternative strategy for initial value selection, i.e.,
$
     \hat{\bm{y}}(t_1) = (\bm{I}-\hat{\bm{A}} )^\mathrm{-1} (\hat{\bm{c}}+\hat{\bm{B}}\bm{u}(t_1) ).
$

\subsection{Discretization of integral operator for explaining the cumulative sum operator}\label{sec:4-1}

The integral operator \eqref{eq:int} plays an important role in the simplification from equation \eqref{eq:gm} to \eqref{eq:de}. Since the observations of the state vector $\bm{x}(t)$ at time points $\{t_k\}_{k=1}^{n}$ are known in advance, the finite integral in equation \eqref{eq:int} can be approximated using the piecewise interpolation methods, such as the piecewise constant integration formula
\begin{equation}\label{eq:22}
    \bm{y}_\mathrm{cnt}(t_k)=\bm{x}(t_1)+\int_{t_1}^{t_k} \bm{x}(s) ds
    =\begin{cases}
        \bm{x}(t_1), & k=1 \\
        \bm{x}(t_1)+\sum\limits_{i=2}^{k}{h_i}{\bm{x}(t_i)}, & k=2,3,\cdots,n \\
    \end{cases}
\end{equation}
and the piecewise linear integration formula
\begin{equation}\label{eq:23}
    \bm{y}_\mathrm{lnt}(t_k)=\bm{x}(t_1)+\int_{t_1}^{t_k} \bm{x}(s) ds
    =\begin{cases}
        \bm{x}(t_1), & k=1 \\
        \bm{x}(t_1)+\dfrac{1}{2}\sum\limits_{i=2}^{k}{h_i}{\bm{x}(t_{i-1})}+\dfrac{1}{2}\sum\limits_{i=2}^{k}{h_i}{\bm{x}(t_i)}, & k=2,3,\cdots,n \\
    \end{cases}.
\end{equation}

\begin{remark}\label{rem:04}
    The piecewise constant integral formula in equation \eqref{eq:22} is equivalent to the Cusum operator in Definition \ref{def:01}, indicating that the Cusum operator is a numerical discretization of the integral operator \eqref{eq:int} and, in turn, the latter is the continuous generalization of the former.
\end{remark}

Note that equation \eqref{eq:23} is a second-order formula and thus has higher accuracy than the first-order \eqref{eq:22}. Furthermore, other higher-order numerical integration approaches can also be used to further improve the accuracy when calculating the finite integration in equation \eqref{eq:int}, such as the Simpson rule \cite{ding2021forecasting}.

\subsection{Integral matching for simultaneous structural parameter and initial value estimation}

Integral matching-based model uses the reduced-order equation \eqref{eq:de} to fit the original time series directly, and the key is to estimate the structural parameter and initial value from sampled time-series data.

Theorem \ref{lem:01} shows that integrating both sides of equation \eqref{eq:de} gives the definite integral equation \eqref{eq:im}. By substituting $\bm{x}(t_1)=\bm{\eta}$ into equation \eqref{eq:im}, one has
\[
    \bm{x}(t)=\bm{A}\int_{t_1}^{t}\bm{x}(s)ds+
    \bm{B}\left(\bm{u}(t)-\bm{u}(t_1)\right)+
    \bm{\eta}, ~ t \geq t_1;
\]
and then, using the numerical integration formula \eqref{eq:23} to approximate the integrals, we obtain
\begin{equation}\label{eq:24}
    \bm{x}(t_k)=\bm{A}\left[ \bm{y}_\mathrm{lnt}(t_k)- \bm{x}(t_1)\right]+
    \bm{B}\left[\bm{u}(t_k)-\bm{u}(t_1)\right]+
    \bm{\eta}+\bm{\epsilon}(t_k)
\end{equation}
where $\bm{\epsilon}(t_k)$ is the sum of numerical error and noise error at each time instant.

Similar to the treatment process in section \ref{sec:3}, by substituting $k=2,3,\cdots,n$ and the observed data into equation \eqref{eq:24}, and then arranging the $n-1$ algebraic equations into a matrix form, one has
\begin{equation}\label{eq:25}
    \bm{X}=\bm{\Omega}(\bm{x}, \bm{u}) \bm{\Pi}+\bm{\Upsilon}
\end{equation}
where
\[
    \bm{\Pi}=
        \begin{bmatrix}
            \bm{A}^\mathsf{T} \\
            \bm{B}^\mathsf{T} \\
            \bm{\eta}^\mathsf{T}
        \end{bmatrix}, ~
    \bm{\Omega}(\bm{x}, \bm{u})=
        \begin{bmatrix}
            \bm{y}^\mathsf{T}_\mathrm{lnt}(t_2)-\bm{x}^\mathsf{T}(t_1) & \bm{u}^\mathsf{T}(t_2)-\bm{u}^\mathsf{T}(t_1) & 1 \\
            \bm{y}^\mathsf{T}_\mathrm{lnt}(t_3)-\bm{x}^\mathsf{T}(t_1) & \bm{u}^\mathsf{T}(t_3)-\bm{u}^\mathsf{T}(t_1) & 1 \\
            \vdots & \vdots & \vdots \\
            \bm{y}^\mathsf{T}_\mathrm{lnt}(t_n)-\bm{x}^\mathsf{T}(t_1) & \bm{u}^\mathsf{T}(t_n)-\bm{u}^\mathsf{T}(t_1) & 1
        \end{bmatrix},~
    \bm{\Upsilon}=
        \begin{bmatrix}
            \bm{\epsilon}^\mathsf{T}(t_2) \\
            \bm{\epsilon}^\mathsf{T}(t_3) \\
            \vdots \\
            \bm{\epsilon}^\mathsf{T}(t_n)
        \end{bmatrix}.
\]

And by minimizing the objective function
$
    \mathcal{L}(\bm{\Pi})=\left\|\bm{X}-\bm{\Omega}(\bm{x}, \bm{u}) \bm{\Pi}\right\|_\mathsf{F}^2
$,
one has the least-square parameter matrix
\begin{equation}\label{eq:26}
    \hat{\bm{\Pi}}
    =\left(\bm{\Omega}^\mathsf{T}(\bm{x},\bm{u})\bm{\Omega}(\bm{x},\bm{u})\right)^{-1}
    \bm{\Omega}^\mathsf{T}(\bm{x},\bm{u})\bm{X}.
\end{equation}

Substituting the estimates of the structural parameters ($\hat{\bm{A}}$ and $\hat{\bm{B}}$) and initial value ($\hat{\bm{\eta}}$) into the general solution to the reduced-order equation \eqref{eq:de}, gives the time response function
\begin{equation}\label{eq:27}
\begin{split}
    \hat{\bm{x}}(t)
        &=
        \exp\left(\hat{\bm{A}}(t-t_1)\right)
        \left\{
            \hat{\bm{\eta}}+
            \exp\left(\hat{\bm{A}}(t_1-t)\right)\hat{\bm{B}}\bm{u}(t)-
            \hat{\bm{B}}\bm{u}(t_1)+
            \int_{t_1}^{t}\exp\left(\hat{\bm{A}}(t_1-s)\right)\hat{\bm{A}}\hat{\bm{B}}\bm{u}(s)ds
        \right\}
\end{split}
\end{equation}
and then the fitting and forecasting values of the original time series can be obtained by substituting the time points $\left\{t_k\right\}_{k=1}^{n+r}$ into equation \eqref{eq:27}.

\section{The relationship between grey models and integral matching-based models}\label{sec:a-1}

In this section we analyse the relationship between grey forecasting models and integral matching-based differential equation models from both the parameter estimation and the modelling procedures perspectives.
For convenience, the original time series is assumed to be equally spaced, i.e., $h_k=h$, $k=2,3,\cdots,n$. The estimated parameters of grey models are denoted as $\hat{\bm{A}}_\mathrm{g}$, $\hat{\bm{B}}_\mathrm{g}$ and $\hat{\bm{c}}_\mathrm{g}$, and
those of integral matching-based ones are denoted as $\hat{\bm{A}}_\mathrm{m}$, $\hat{\bm{B}}_\mathrm{m}$  and $\hat{\bm{\eta}}_\mathrm{m}$.

\subsection{Quantitative relationship between the estimated parameters}\label{sec:a4-1}

\begin{proposition}\label{pro:1}
    Let the independent vector be $\bm{u}(t)=0$. The estimated parameters satisfy $\hat{\bm{A}}_\mathrm{g}=\hat{\bm{A}}_\mathrm{m}$ and $\hat{\bm{\eta}}_\mathrm{m}=\hat{\bm{c}}_\mathrm{g}+\hat{\bm{A}}_\mathrm{g} \bm{x}(t_1)-\frac{h}{2} \hat{\bm{A}}_\mathrm{g} \bm{x}(t_1)$.
\end{proposition}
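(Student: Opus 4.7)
The plan is to show that, under equally spaced sampling with $\bm{u}(t)=\bm{0}$, the grey-model design matrix $\bm{\Theta}(\bm{y},\bm{0})$ in \eqref{eq:10} and the integral-matching design matrix $\bm{\Omega}(\bm{x},\bm{0})$ in \eqref{eq:25} are related by $\bm{\Theta}(\bm{y},\bm{0})=\bm{\Omega}(\bm{x},\bm{0})\,\bm{T}$ for some invertible $(d+1)\times(d+1)$ matrix $\bm{T}$. Because both regressions share the same response block $\bm{X}$, the normal equations then force $\hat{\bm{\Pi}}_\mathrm{m}=\bm{T}\hat{\bm{\Xi}}_\mathrm{g}$, and reading $\bm{T}$ off directly yields the two identities in the claim.

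First I would unroll both regressor columns in terms of the raw observations. Using $h_1=1$ and $h_k=h$ for $k\geq 2$, a direct summation gives
\[
    \frac{\bm{y}(t_{k-1})+\bm{y}(t_k)}{2}=\bm{x}(t_1)+h\sum_{i=2}^{k-1}\bm{x}(t_i)+\frac{h}{2}\bm{x}(t_k),
\]
while the piecewise-linear integrator \eqref{eq:23} gives
\[
    \bm{y}_\mathrm{lnt}(t_k)-\bm{x}(t_1)=\frac{h}{2}\bm{x}(t_1)+h\sum_{i=2}^{k-1}\bm{x}(t_i)+\frac{h}{2}\bm{x}(t_k).
\]
Subtracting these produces the key identity
\[
    \frac{\bm{y}(t_{k-1})+\bm{y}(t_k)}{2}=\bigl(\bm{y}_\mathrm{lnt}(t_k)-\bm{x}(t_1)\bigr)+\Bigl(1-\frac{h}{2}\Bigr)\bm{x}(t_1),
\]
valid for every $k=2,\dots,n$. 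Reading this as row $k-1$ of $\bm{\Theta}=\bm{\Omega}\bm{T}$ forces
$\bm{T}=\begin{bmatrix}\bm{I}_d & \bm{0}\\ (1-h/2)\bm{x}^\mathsf{T}(t_1) & 1\end{bmatrix}$,
which is block-triangular with unit diagonal blocks and hence invertible. Substituting into the normal-equation identity $\hat{\bm{\Pi}}_\mathrm{m}=\bm{T}\hat{\bm{\Xi}}_\mathrm{g}$ and partitioning $\hat{\bm{\Xi}}_\mathrm{g}=[\hat{\bm{A}}_\mathrm{g}^\mathsf{T};\,\hat{\bm{c}}_\mathrm{g}^\mathsf{T}]$, $\hat{\bm{\Pi}}_\mathrm{m}=[\hat{\bm{A}}_\mathrm{m}^\mathsf{T};\,\hat{\bm{\eta}}_\mathrm{m}^\mathsf{T}]$, block multiplication reads off $\hat{\bm{A}}_\mathrm{m}=\hat{\bm{A}}_\mathrm{g}$ and $\hat{\bm{\eta}}_\mathrm{m}=\hat{\bm{c}}_\mathrm{g}+(1-h/2)\hat{\bm{A}}_\mathrm{g}\bm{x}(t_1)$, which is exactly the claim.

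The only real obstacle is the boundary-term bookkeeping. The Cusum convention assigns $\bm{x}(t_1)$ weight $h_1=1$, while the trapezoidal formula \eqref{eq:23} assigns it weight $h/2$; this $1-h/2$ mismatch is precisely what produces the correction $-\tfrac{h}{2}\hat{\bm{A}}_\mathrm{g}\bm{x}(t_1)$ relative to the naive guess $\hat{\bm{c}}_\mathrm{g}+\hat{\bm{A}}_\mathrm{g}\bm{x}(t_1)$. Once that asymmetry is isolated the rest is routine linear-algebraic invariance of least squares under an invertible parameter change, and no optimization argument beyond uniqueness of the normal-equation solution is required.
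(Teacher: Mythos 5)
Your proposal is correct and follows essentially the same route as the paper: both identify the boundary mismatch $\frac{y(t_{k-1})+y(t_k)}{2}=\bigl(y_\mathrm{lnt}(t_k)-x(t_1)\bigr)+\frac{2-h}{2}x(t_1)$, encode it as an invertible block-triangular change of design matrix (your $\bm{T}$ is exactly the inverse of the paper's $\bm{Q}$ in $\bm{Y}_\mathrm{m}=\bm{Y}_\mathrm{g}\bm{Q}$), and invoke least-squares invariance under reparametrization to read off $\hat{\bm{A}}_\mathrm{m}=\hat{\bm{A}}_\mathrm{g}$ and $\hat{\bm{\eta}}_\mathrm{m}=\hat{\bm{c}}_\mathrm{g}+(1-\tfrac{h}{2})\hat{\bm{A}}_\mathrm{g}\bm{x}(t_1)$.
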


\begin{proof}
    From equations \eqref{eq:12} and \eqref{eq:26}, it is easy to show that the estimated parameters are
    \[
        \begin{bmatrix}
            \hat{\bm{A}}_\mathrm{g}^\mathsf{T} \\
            \hat{\bm{c}}_\mathrm{g}^\mathsf{T}
        \end{bmatrix}
        =\left(\bm{Y}_\mathrm{g}^\mathsf{T}\bm{Y}_\mathrm{g}\right)^{-1} \bm{Y}_\mathrm{g}^\mathsf{T}\bm{X}
    \quad
    \text{and}
    \quad
        \begin{bmatrix}
            \hat{\bm{A}}_\mathrm{m}^\mathsf{T} \\
            \hat{\bm{\eta}}_\mathrm{m}^\mathsf{T}
        \end{bmatrix}
        =\left(\bm{Y}_\mathrm{m}^\mathsf{T}\bm{Y}_\mathrm{m}\right)^{-1} \bm{Y}_\mathrm{m}^\mathsf{T}\bm{X}
    \]
    where
    \[
    \bm{Y}_\mathrm{g}=
        \begin{bmatrix}
            \frac{\bm{y}^\mathsf{T}(t_1)+\bm{y}^\mathsf{T}(t_2)}{2} & 1 \\
            \frac{\bm{y}^\mathsf{T}(t_2)+\bm{y}^\mathsf{T}(t_3)}{2} & 1 \\
            \vdots & \vdots  \\
            \frac{\bm{y}^\mathsf{T}(t_{n-1})+\bm{y}^\mathsf{T}(t_n)}{2} & 1
        \end{bmatrix}
    \quad
    \text{and}
    \quad
    \bm{Y}_\mathrm{m}=
        \begin{bmatrix}
            \bm{y}^\mathsf{T}_\mathrm{lnt}(t_2)-\bm{x}^\mathsf{T}(t_1) & 1 \\
            \bm{y}^\mathsf{T}_\mathrm{lnt}(t_3)-\bm{x}^\mathsf{T}(t_1) & 1 \\
            \vdots & \vdots \\
            \bm{y}^\mathsf{T}_\mathrm{lnt}(t_n)-\bm{x}^\mathsf{T}(t_1) & 1
        \end{bmatrix}.
    \]

    Because the original time series is equally spaced, equation \eqref{eq:23} can be rewritten as
    \[
        \bm{y}_\mathrm{lnt}(t_k)=
            \frac{1}{2}\bm{y}(t_{k-1})+\frac{1}{2}\bm{y}(t_{k})+\frac{h}{2}\bm{x}(t_1),
            ~ k=2,3,\cdots,n
    \]
    and thus we have
    $
        \bm{Y}_\mathrm{m}
            =\bm{Y}_\mathrm{g}\bm{Q}
    $,
    where $\bm{I}_d$ is the $d$th-order unit matrix,
    \[
        \bm{Q}=\begin{bmatrix}
            \bm{I}_d & 0 \\
            \frac{h-2}{2} \bm{x}^\mathsf{T}(t_1) &  1
        \end{bmatrix}
        \quad
        \text{and}
        \quad
        \bm{Q}^{-1}=\begin{bmatrix}
            \bm{I}_d & 0 \\
            -\frac{h-2}{2} \bm{x}^\mathsf{T}(t_1) &  1
        \end{bmatrix}
    \]

    Finally, a simple substitution gives the desired equation
    \[
    \begin{bmatrix}
        \hat{\bm{A}}_\mathrm{m}^\mathsf{T} \\
        \hat{\bm{\eta}}_\mathrm{m}^\mathsf{T}
        \end{bmatrix}
        =\left(\left(\bm{Y}_\mathrm{g}\bm{Q}\right)^\mathsf{T} \left(\bm{Y}_\mathrm{g}\bm{Q}\right) \right)^{-1}
         \left(\bm{Y}_\mathrm{g}\bm{Q}\right)^\mathsf{T} \bm{X}
        =\bm{Q}^{-1}
            \begin{bmatrix}
                \hat{\bm{A}}_\mathrm{g}^\mathsf{T} \\
                \hat{\bm{c}}_\mathrm{g}^\mathsf{T}
            \end{bmatrix}
        =\begin{bmatrix}
            \hat{\bm{A}}_\mathrm{g}^\mathsf{T} \\
            \hat{\bm{c}}_\mathrm{g}^\mathsf{T}+
            \bm{x}^\mathsf{T}(t_1) \hat{\bm{A}}_\mathrm{g}^\mathsf{T}- \frac{h}{2} \bm{x}^\mathsf{T}(t_1) \hat{\bm{A}}_\mathrm{g}^\mathsf{T}
        \end{bmatrix}.
    \]

    Proposition \ref{pro:1} shows that the estimated matrices satisfy $\hat{\bm{A}}_\mathrm{g}=\hat{\bm{A}}_\mathrm{m}$ and $\hat{\bm{c}}_\mathrm{g}=\hat{\bm{\eta}}_\mathrm{m}-\hat{\bm{A}}_\mathrm{g} \bm{x}(t_1)+\frac{h}{2} \hat{\bm{A}}_\mathrm{g} \bm{x}(t_1)$, showing that grey models can be viewed as equivalent forms of integral matching-based models by applying a translation transformation (translation coefficient is $\bm{\xi}=\frac{h-2}{2} \bm{x}(t_1)$ in Lemma \ref{lem:01})  to the Cusum series, and considering the impact of translation transformation on the estimates is $\frac{2-h}{2} \hat{\bm{A}}_\mathrm{g} \bm{x}(t_1)$ in $\hat{\bm{c}}_\mathrm{g}$.
\end{proof}

If the independent vector $\bm{u}(t)\neq 0$, the quantitative relationship is not that obvious. By partitioning the estimated parameter matrices in equations \eqref{eq:12} and \eqref{eq:26} as
$
\hat{\bm{\Xi}}_\mathrm{g}^\mathsf{T} =
    \begin{bmatrix}
        \hat{\bm{A}}_\mathrm{g} ~ \hat{\bm{c}}_\mathrm{g}  ~ | ~ \hat{\bm{B}}_\mathrm{g}
    \end{bmatrix}
$
and
$
\hat{\bm{\Pi}}_\mathrm{m}^\mathsf{T} =
    \begin{bmatrix}
        \hat{\bm{A}}_\mathrm{m} ~ \hat{\bm{\eta}}_\mathrm{m}  ~ | ~ \hat{\bm{B}}_\mathrm{m}
    \end{bmatrix},
$
then we have
\begin{equation}\label{eq:31}
    \hat{\bm{\Xi}}_\mathrm{g}
    =\left(
        \begin{bmatrix}
            {\bm{Y}}_\mathrm{g}^\mathsf{T} \\
            {\bm{U}}_\mathrm{g}^\mathsf{T}
        \end{bmatrix}
        \begin{bmatrix}
            {\bm{Y}}_\mathrm{g} & {\bm{U}}_\mathrm{g}
        \end{bmatrix}
    \right)^{-1}
    \begin{bmatrix}
        {\bm{Y}}_\mathrm{g}^\mathsf{T} \\
        {\bm{U}}_\mathrm{g}^\mathsf{T}
    \end{bmatrix} \bm{X}
    =
    \begin{bmatrix}
        {\bm{Y}}_\mathrm{g}^\mathsf{T} {\bm{Y}}_\mathrm{g} & {\bm{Y}}_\mathrm{g}^\mathsf{T} {\bm{U}}_\mathrm{g}  \\
        {\bm{U}}_\mathrm{g}^\mathsf{T} {\bm{Y}}_\mathrm{g} & {\bm{U}}_\mathrm{g}^\mathsf{T} {\bm{U}}_\mathrm{g}
    \end{bmatrix} ^{-1}
    \begin{bmatrix}
        {\bm{Y}}_\mathrm{g}^\mathsf{T}\bm{X} \\
        {\bm{U}}_\mathrm{g}^\mathsf{T}\bm{X}
    \end{bmatrix}
\end{equation}
and
\begin{equation}\label{eq:32}
    \hat{\bm{\Pi}}_\mathrm{m}=
    \left(
        \begin{bmatrix}
            {\bm{Y}}_\mathrm{m}^\mathsf{T} \\
            {\bm{U}}_\mathrm{m}^\mathsf{T}
        \end{bmatrix}
        \begin{bmatrix}
            {\bm{Y}}_\mathrm{m} & {\bm{U}}_\mathrm{m}
        \end{bmatrix}
    \right)^{-1}
    \begin{bmatrix}
        {\bm{Y}}_\mathrm{m}^\mathsf{T} \\
        {\bm{U}}_\mathrm{m}^\mathsf{T}
    \end{bmatrix} \bm{X}
    =
    \begin{bmatrix}
        {\bm{Y}}_\mathrm{m}^\mathsf{T} {\bm{Y}}_\mathrm{m} & {\bm{Y}}_\mathrm{m}^\mathsf{T} {\bm{U}}_\mathrm{m}  \\
        {\bm{U}}_\mathrm{m}^\mathsf{T} {\bm{Y}}_\mathrm{m} & {\bm{U}}_\mathrm{m}^\mathsf{T} {\bm{U}}_\mathrm{m}
    \end{bmatrix} ^{-1}
    \begin{bmatrix}
        {\bm{Y}}_\mathrm{m}^\mathsf{T}\bm{X} \\
        {\bm{U}}_\mathrm{m}^\mathsf{T}\bm{X}
    \end{bmatrix}
\end{equation}
where
\[
    \bm{U}_\mathrm{g}=\frac{1}{2}
    \begin{bmatrix}
        \bm{u}^\mathsf{T}(t_2) + \bm{u}^\mathsf{T}(t_1) \\
        \bm{u}^\mathsf{T}(t_3) + \bm{u}^\mathsf{T}(t_2) \\
         \vdots \\
        \bm{u}^\mathsf{T}(t_n) + \bm{u}^\mathsf{T}(t_{n-1})
    \end{bmatrix}
    =\begin{bmatrix}
            \bm{u}^\mathsf{T}(t_2) \\
            \bm{u}^\mathsf{T}(t_3) \\
             \vdots \\
            \bm{u}^\mathsf{T}(t_n)
        \end{bmatrix}
    +\frac{1}{2}
        \begin{bmatrix}
            \bm{u}^\mathsf{T}(t_1) - \bm{u}^\mathsf{T}(t_2) \\
            \bm{u}^\mathsf{T}(t_2) - \bm{u}^\mathsf{T}(t_3) \\
            \vdots \\
            \bm{u}^\mathsf{T}(t_{n-1}) - \bm{u}^\mathsf{T}(t_n)
        \end{bmatrix}
\]
    \text{and}
\[
    \bm{U}_\mathrm{m}=
    \begin{bmatrix}
        \bm{u}^\mathsf{T}(t_2)-\bm{u}^\mathsf{T}(t_1) \\
        \bm{u}^\mathsf{T}(t_3)-\bm{u}^\mathsf{T}(t_1) \\
         \vdots \\
        \bm{u}^\mathsf{T}(t_n)-\bm{u}^\mathsf{T}(t_1)
    \end{bmatrix}
    =
    \begin{bmatrix}
        \bm{u}^\mathsf{T}(t_2) \\
        \bm{u}^\mathsf{T}(t_3) \\
        \vdots \\
        \bm{u}^\mathsf{T}(t_n)
    \end{bmatrix}
    -
    \begin{bmatrix}
        \bm{u}^\mathsf{T}(t_1) \\
        \bm{u}^\mathsf{T}(t_1) \\
        \vdots \\
        \bm{u}^\mathsf{T}(t_1)
    \end{bmatrix}
    =
    \bm{U}_\mathrm{g}
    +
    \frac{1}{2}
        \begin{bmatrix}
            \bm{u}^\mathsf{T}(t_2) - \bm{u}^\mathsf{T}(t_1)  \\
            \bm{u}^\mathsf{T}(t_3) - \bm{u}^\mathsf{T}(t_2)  \\
            \vdots \\
            \bm{u}^\mathsf{T}(t_n) - \bm{u}^\mathsf{T}(t_{n-1})
        \end{bmatrix}
    -
    \begin{bmatrix}
        \bm{u}^\mathsf{T}(t_1) \\
        \bm{u}^\mathsf{T}(t_1) \\
        \vdots \\
        \bm{u}^\mathsf{T}(t_1)
    \end{bmatrix}.
\]

By denoting $\bm{U}_\mathrm{m}=\bm{U}_\mathrm{g}+\bm{U}$ and combining $\bm{Y}_\mathrm{m}=\bm{Y}_\mathrm{g}\bm{Q}$ in Proposition \ref{pro:1}, then the matrices in equation \eqref{eq:32} can be rewritten as
\begin{equation}\label{eq:33}
    \begin{bmatrix}
        {\bm{Y}}_\mathrm{m}^\mathsf{T}\bm{X} \\
        {\bm{U}}_\mathrm{m}^\mathsf{T}\bm{X}
    \end{bmatrix}
    =
    \begin{bmatrix}
        {\bm{Q}}^\mathsf{T}  & {\bm{0}} \\
        {\bm{0}} & {\bm{I}}_{p}
    \end{bmatrix}
    \left(
        \begin{bmatrix}
            {\bm{Y}}_\mathrm{g}^\mathsf{T}\bm{X} \\
            {\bm{U}}_\mathrm{g}^\mathsf{T}\bm{X}
        \end{bmatrix}
        +
        \begin{bmatrix}
            \bm{0} \\
            \bm{U}^\mathsf{T} \bm{X}
        \end{bmatrix}
    \right)
\end{equation}
and
\begin{equation}\label{eq:34}
    \begin{bmatrix}
        {\bm{Y}}_\mathrm{m}^\mathsf{T} {\bm{Y}}_\mathrm{m} & {\bm{Y}}_\mathrm{m}^\mathsf{T} {\bm{U}}_\mathrm{m}  \\
        {\bm{U}}_\mathrm{m}^\mathsf{T} {\bm{Y}}_\mathrm{m} & {\bm{U}}_\mathrm{m}^\mathsf{T} {\bm{U}}_\mathrm{m}
    \end{bmatrix}
    =
    \begin{bmatrix}
        {\bm{Q}}^\mathsf{T}  & {\bm{0}} \\
        {\bm{0}} & {\bm{I}}_{p}
    \end{bmatrix}
    \left(
        \begin{bmatrix}
            {\bm{Y}}_\mathrm{g}^\mathsf{T} {\bm{Y}}_\mathrm{g} & {\bm{Y}}_\mathrm{g}^\mathsf{T} {\bm{U}}_\mathrm{g}  \\
            {\bm{U}}_\mathrm{g}^\mathsf{T} {\bm{Y}}_\mathrm{g} & {\bm{U}}_\mathrm{g}^\mathsf{T} {\bm{U}}_\mathrm{g}
        \end{bmatrix}
        +
        \begin{bmatrix}
            \bm{0} & {\bm{Y}}_\mathrm{g}^\mathsf{T} {\bm{U}}  \\
            \bm{U}^\mathsf{T} \bm{Y}_\mathrm{g} & \bm{W}
        \end{bmatrix}
    \right)
    \begin{bmatrix}
        {\bm{Q}}  & {\bm{0}} \\
        {\bm{0}} & {\bm{I}}_{p}
    \end{bmatrix}
\end{equation}
where $\bm{W}=\bm{U}_\mathrm{g}^\mathsf{T}\bm{U} + \bm{U}^\mathsf{T}\bm{U}_\mathrm{g} +\bm{U}^\mathsf{T}\bm{U}$.

By substituting equations \eqref{eq:33} and \eqref{eq:34} into equation \eqref{eq:32}, the estimates are obtained as
\begin{equation}\label{eq:35}
    \hat{\bm{\Pi}}_\mathrm{m}
    =
    \begin{bmatrix}
        {\bm{Q}}^{-1}  & {\bm{0}} \\
        {\bm{0}} & {\bm{I}}_{p}
    \end{bmatrix}
    \left(
        \begin{bmatrix}
            {\bm{Y}}_\mathrm{g}^\mathsf{T} {\bm{Y}}_\mathrm{g} & {\bm{Y}}_\mathrm{g}^\mathsf{T} {\bm{U}}_\mathrm{g}  \\
            {\bm{U}}_\mathrm{g}^\mathsf{T} {\bm{Y}}_\mathrm{g} & {\bm{U}}_\mathrm{g}^\mathsf{T} {\bm{U}}_\mathrm{g}
        \end{bmatrix}
        +
        \begin{bmatrix}
            \bm{0} & {\bm{Y}}_\mathrm{g}^\mathsf{T} {\bm{U}}  \\
            \bm{U}^\mathsf{T} \bm{Y}_\mathrm{g} & \bm{W}
        \end{bmatrix}
    \right)^{-1}
    \left(
        \begin{bmatrix}
            {\bm{Y}}_\mathrm{g}^\mathsf{T}\bm{X} \\
            {\bm{U}}_\mathrm{g}^\mathsf{T}\bm{X}
        \end{bmatrix}
        +
        \begin{bmatrix}
            \bm{0} \\
            \bm{U}^\mathsf{T} \bm{X}
        \end{bmatrix}
    \right)
\end{equation}
where, by using the Sherman-Morrison-Woodbury formula \cite{van1983matrix}, the inverse matrix can be rewritten as
\[
    \begin{bmatrix}
        {\bm{Y}}_\mathrm{g}^\mathsf{T} {\bm{Y}}_\mathrm{g} & {\bm{Y}}_\mathrm{g}^\mathsf{T} {\bm{U}}_\mathrm{g}  \\
        {\bm{U}}_\mathrm{g}^\mathsf{T} {\bm{Y}}_\mathrm{g} & {\bm{U}}_\mathrm{g}^\mathsf{T} {\bm{U}}_\mathrm{g}
    \end{bmatrix}^{-1}
    -
    \begin{bmatrix}
        {\bm{Y}}_\mathrm{g}^\mathsf{T} {\bm{Y}}_\mathrm{g} & {\bm{Y}}_\mathrm{g}^\mathsf{T} {\bm{U}}_\mathrm{g}  \\
        {\bm{U}}_\mathrm{g}^\mathsf{T} {\bm{Y}}_\mathrm{g} & {\bm{U}}_\mathrm{g}^\mathsf{T} {\bm{U}}_\mathrm{g}
    \end{bmatrix}^{-1}
    \left(
        \begin{bmatrix}
            {\bm{Y}}_\mathrm{g}^\mathsf{T} {\bm{Y}}_\mathrm{g} & {\bm{Y}}_\mathrm{g}^\mathsf{T} {\bm{U}}_\mathrm{g}  \\
            {\bm{U}}_\mathrm{g}^\mathsf{T} {\bm{Y}}_\mathrm{g} & {\bm{U}}_\mathrm{g}^\mathsf{T} {\bm{U}}_\mathrm{g}
        \end{bmatrix}^{-1}
        +
        \begin{bmatrix}
            \bm{0} & {\bm{Y}}_\mathrm{g}^\mathsf{T} {\bm{U}}  \\
            \bm{U}^\mathsf{T} \bm{Y}_\mathrm{g} & \bm{W}
        \end{bmatrix}^{-1}
    \right)
    \begin{bmatrix}
        {\bm{Y}}_\mathrm{g}^\mathsf{T} {\bm{Y}}_\mathrm{g} & {\bm{Y}}_\mathrm{g}^\mathsf{T} {\bm{U}}_\mathrm{g}  \\
        {\bm{U}}_\mathrm{g}^\mathsf{T} {\bm{Y}}_\mathrm{g} & {\bm{U}}_\mathrm{g}^\mathsf{T} {\bm{U}}_\mathrm{g}
    \end{bmatrix}^{-1}.
\]
Therefore, it follows that
{
\[
\begin{split}
    \hat{\bm{\Pi}}_\mathrm{m}
    & =
    \begin{bmatrix}
        {\bm{Q}}^{-1}  & {\bm{0}} \\
        {\bm{0}} & {\bm{I}}_{p}
    \end{bmatrix}
    \hat{\bm{\Xi}}_\mathrm{g} \\
    & +
    \begin{bmatrix}
        {\bm{Q}}^{-1}  & {\bm{0}} \\
        {\bm{0}} & {\bm{I}}_{p}
    \end{bmatrix}
    \begin{bmatrix}
        {\bm{Y}}_\mathrm{g}^\mathsf{T} {\bm{Y}}_\mathrm{g} & {\bm{Y}}_\mathrm{g}^\mathsf{T} {\bm{U}}_\mathrm{g}  \\
        {\bm{U}}_\mathrm{g}^\mathsf{T} {\bm{Y}}_\mathrm{g} & {\bm{U}}_\mathrm{g}^\mathsf{T} {\bm{U}}_\mathrm{g}
    \end{bmatrix}^{-1}
    \begin{bmatrix}
        \bm{0} \\
        \bm{U}^\mathsf{T} \bm{X}
    \end{bmatrix} \\
    & -
    \begin{bmatrix}
        {\bm{Q}}^{-1}  & {\bm{0}} \\
        {\bm{0}} & {\bm{I}}_{p}
    \end{bmatrix}
    \begin{bmatrix}
        {\bm{Y}}_\mathrm{g}^\mathsf{T} {\bm{Y}}_\mathrm{g} & {\bm{Y}}_\mathrm{g}^\mathsf{T} {\bm{U}}_\mathrm{g}  \\
        {\bm{U}}_\mathrm{g}^\mathsf{T} {\bm{Y}}_\mathrm{g} & {\bm{U}}_\mathrm{g}^\mathsf{T} {\bm{U}}_\mathrm{g}
    \end{bmatrix}^{-1}
    \left(
        \begin{bmatrix}
            {\bm{Y}}_\mathrm{g}^\mathsf{T} {\bm{Y}}_\mathrm{g} & {\bm{Y}}_\mathrm{g}^\mathsf{T} {\bm{U}}_\mathrm{g}  \\
            {\bm{U}}_\mathrm{g}^\mathsf{T} {\bm{Y}}_\mathrm{g} & {\bm{U}}_\mathrm{g}^\mathsf{T} {\bm{U}}_\mathrm{g}
        \end{bmatrix}^{-1}
        +
        \begin{bmatrix}
            \bm{0} & {\bm{Y}}_\mathrm{g}^\mathsf{T} {\bm{U}}  \\
            \bm{U}^\mathsf{T} \bm{Y}_\mathrm{g} & \bm{W}
        \end{bmatrix}^{-1}
    \right)
    \hat{\bm{\Xi}}_\mathrm{g}  \\
    &-
    \begin{bmatrix}
        {\bm{Q}}^{-1}  & {\bm{0}} \\
        {\bm{0}} & {\bm{I}}_{p}
    \end{bmatrix}
    \begin{bmatrix}
        {\bm{Y}}_\mathrm{g}^\mathsf{T} {\bm{Y}}_\mathrm{g} & {\bm{Y}}_\mathrm{g}^\mathsf{T} {\bm{U}}_\mathrm{g}  \\
        {\bm{U}}_\mathrm{g}^\mathsf{T} {\bm{Y}}_\mathrm{g} & {\bm{U}}_\mathrm{g}^\mathsf{T} {\bm{U}}_\mathrm{g}
    \end{bmatrix}^{-1}
    \left(
        \begin{bmatrix}
            {\bm{Y}}_\mathrm{g}^\mathsf{T} {\bm{Y}}_\mathrm{g} & {\bm{Y}}_\mathrm{g}^\mathsf{T} {\bm{U}}_\mathrm{g}  \\
            {\bm{U}}_\mathrm{g}^\mathsf{T} {\bm{Y}}_\mathrm{g} & {\bm{U}}_\mathrm{g}^\mathsf{T} {\bm{U}}_\mathrm{g}
        \end{bmatrix}^{-1}
        +
        \begin{bmatrix}
            \bm{0} & {\bm{Y}}_\mathrm{g}^\mathsf{T} {\bm{U}}  \\
            \bm{U}^\mathsf{T} \bm{Y}_\mathrm{g} & \bm{W}
        \end{bmatrix}^{-1}
    \right)
    \begin{bmatrix}
        {\bm{Y}}_\mathrm{g}^\mathsf{T} {\bm{Y}}_\mathrm{g} & {\bm{Y}}_\mathrm{g}^\mathsf{T} {\bm{U}}_\mathrm{g}  \\
        {\bm{U}}_\mathrm{g}^\mathsf{T} {\bm{Y}}_\mathrm{g} & {\bm{U}}_\mathrm{g}^\mathsf{T} {\bm{U}}_\mathrm{g}
    \end{bmatrix}^{-1}
    \begin{bmatrix}
        \bm{0} \\
        \bm{U}^\mathsf{T} \bm{X}
    \end{bmatrix}
\end{split}
\]
}
where the first term on the right hand side shows that
\[
    \hat{\bm{A}}_\mathrm{g}=\hat{\bm{A}}_\mathrm{m},~
    \hat{\bm{B}}_\mathrm{g}=\hat{\bm{B}}_\mathrm{m},~
    \hat{\bm{\eta}}_\mathrm{m}=\hat{\bm{c}}_\mathrm{g}-\frac{h-2}{2} \hat{\bm{A}}_\mathrm{g} \bm{x}(t_1)
\]
and the other three terms can be viewed as the calibration of the estimated parameter matrices by using the information in the matrix $\bm{U}$. In particular, the solution in equation \eqref{eq:35} yields that in Proposition \ref{pro:1} when the matrix is  $\bm{U}=\bm{0}$.

\subsection{Modelling procedures reconstruction of grey forecasting models}\label{sec:4-2}

The main objective of both grey models and integral matching-based models is the explanation of time-series data using ordinary differential equations. The modelling procedures of both models are summarized in Figure \ref{fig:0}.

\begin{figure}[!ht]
    \centering
    \includegraphics[scale=0.66, trim = 0 0 0 0, clip = true]{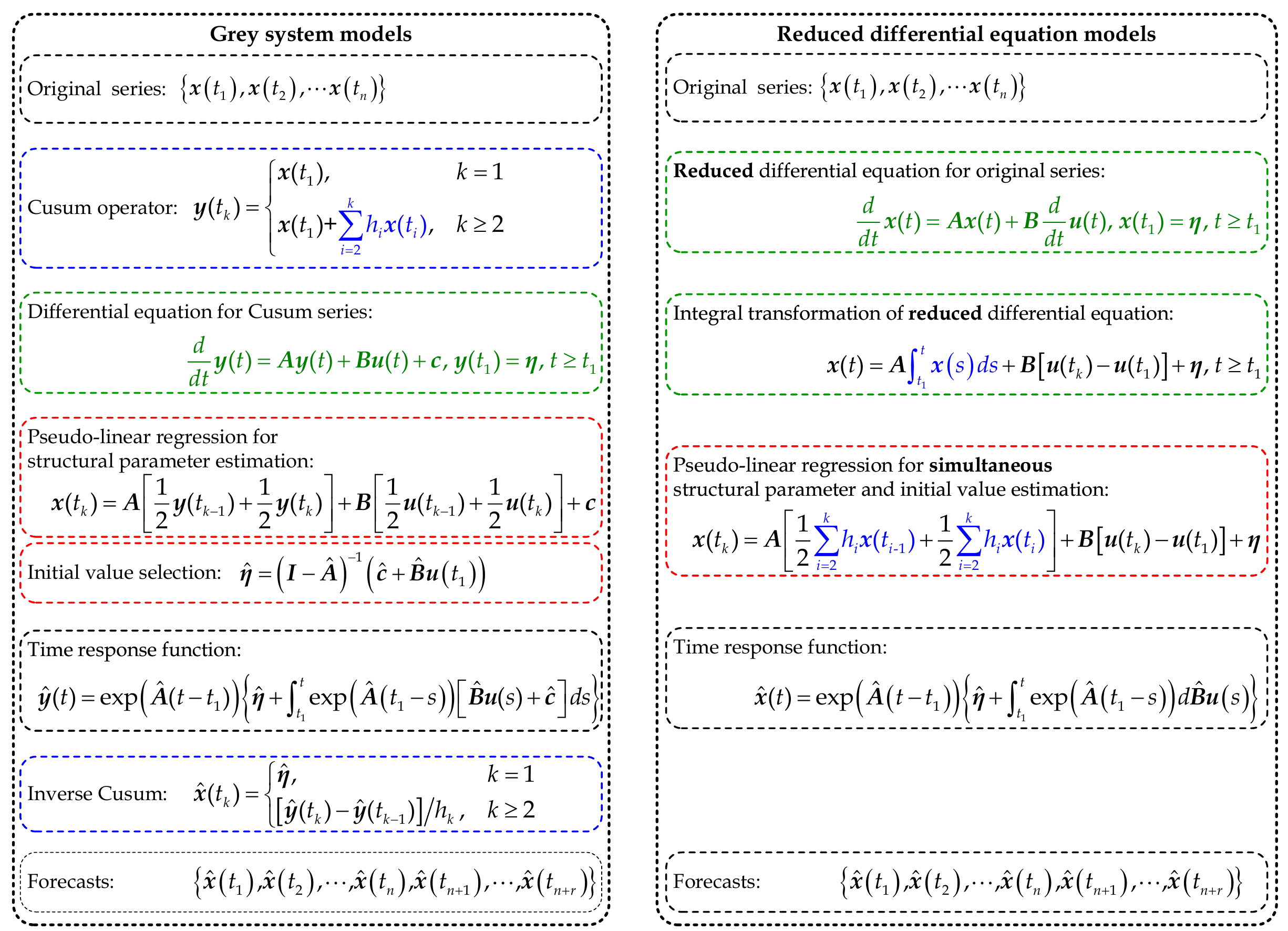} \\
    \caption{Reconstructed modelling procedures of grey models by integral matching-based models.}
    \label{fig:0}
\end{figure}

Figure \ref{fig:0} shows that the modelling procedures of integral matching-based models can be viewed as the simplification and reconstruction of grey models. In particular:
\begin{enumerate}
    \item [(i)]
        Grey models use ordinary differential equations to fit the Cusum series, while integral matching-based models employ reduced-order differential equations to explain and approximate the original series.
    \item [(ii)]
        Both of them use the trapezoid rule to discretize the ordinary differential equations into the discrete-time pseudo-regression expressions when estimating the parameters.
    \item [(iii)]
        The integral matching-based models obtain the estimates of structural parameters and initial values simultaneously, whereas the conventional grey models need to select a strategy to determine the initial value, such as the least-square strategy in equation \eqref{eq:16}, which may well introduce additional errors \cite{wei2019data}.
    \item [(iv)]
        Both of them use the time response function to compute the fitting and forecasting values. The grey models need to use the inverse Cusum operator to restore the forecasting results of the Cusum series, which makes this kind of model more tedious than the integral matching-based ones.
\end{enumerate}

The above modelling procedures show that both grey and integral matching-based models are constructed under the assumption that each independent variable ${u}_\iota(t)$ is known in advance, such as the deterministic function of time in Remark \ref{rem:1}. In some cases, there exists time-lag effect between the state variable and independent variable, such as the cases in Remark \ref{rem:2} having $u_\iota(t)=y_\iota(t-\tau_\iota h)$ where $\tau_\iota\geq 1$ is the time delay order and $h$ is the time interval. It is worth noting, however, that the possibility of complex `forcing variables' occurs often in science and economics. In forecasting terms, such complex variables need to be accounted for as `leading indicators' over the forecasting horizon. In this situation, these input variables will themselves have to be modelled and forecasted over the forecasting horizon. For example, univariate or multivariate time series methods can be applied, such as the dynamic harmonic regression \cite{young1999dynamic,tych2002unobserved}; see \cite{Young:2017qf} for a comprehensive example.

\section{Simulations}\label{sec:5}

In this section large-scale Monte Carlo simulations are conducted to compare grey models and integral matching-based models from both parameter estimation and forecasting accuracy perspectives.

The time-series data are generated from the observation equation
\begin{equation}\label{eq:a1}
    \tilde{\bm{x}}(t_k)=\begin{cases}
        \bm{x}(t_k)+\bm{e}(k), & k=1,2,\cdots,n \quad \leftarrow \text{in-sample series} \\
        \bm{x}(t_k), & k=n+1,\cdots, n+r        \quad \leftarrow \text{out-of-sample series} \\
    \end{cases}
\end{equation}
where $\bm{x}(t)$ is the solution to the state equation \eqref{eq:de}; $\bm{e}(k) \sim \mathcal{N}(\bm{0},\bm{\Sigma})$ is a $d$-dimensional normal distribution with mean $\bm{0}$ and covariance matrix $\bm{\Sigma}=\sigma^2\bm{I}_d$.
The fitting and forecasting errors are measured by the in-sample and out-of-sample mean absolute percentage error criteria, respectively expressed as
\[
    \mathrm{MAPE_{in}}[x_\ell]
    =\frac{1}{n}\sum_{k=1}^{n} \left| \frac{\hat{x}_\ell(t_k)-\tilde{x}_\ell(t_k)}{\tilde{x}_\ell(t_k)} \right|
        \times 100\%
\text{~ and ~}
    \mathrm{MAPE_{out}}[x_\ell]
    =\frac{1}{r}\sum_{k=n+1}^{n+r} \left| \frac{\hat{x}_\ell(t_k)-{x}_\ell(t_k)}{{x}_\ell(t_k)} \right|
        \times 100\%
\]
where $x_\ell(t)$, $\ell= 1,2,\cdots,d$, is the $\ell$-th component of the state variable $\bm{x}(t)$.

\subsection{Data generating process}

To analyse the effect of sample size and noise level on modelling results, the sample size $n$ and standard deviation $\sigma$ are set to different values. Here, we select a 2-dimensional autonomous system as a comprehensive example. Simulations of single-variable and other multi-variable models can be conducted by following similar steps.
Let the state equation \eqref{eq:de} be
\begin{align*}
    \begin{cases}
        \frac{d}{dt}x_1 (t) = -0.25 x_{1}(t)+0.70 x_{2}(t) \\
        \frac{d}{dt}x_2 (t) = 0.75 x_{1}(t)-0.25 x_{2}(t)
    \end{cases}, ~
    \begin{bmatrix}
        x_1(0) \\
        x_2(0)
    \end{bmatrix}
    =\begin{bmatrix}
        1.20 \\
        0.35
    \end{bmatrix},~
    t\in[0, 5]
\end{align*}
which indicates that the true values of structural parameter matrix and initial vector are
\[
\bm{A}
=\begin{bmatrix}
    a_{1,1} & a_{1,2} \\
    a_{2,1} & a_{2,2}
\end{bmatrix}
=\begin{bmatrix}
    -0.25 & 0.70 \\
    0.75 & -0.25
\end{bmatrix},~
\bm{\eta}
=\begin{bmatrix}
    \eta_1 \\
    \eta_2
\end{bmatrix}
=\begin{bmatrix}
    1.20 \\
    0.35
\end{bmatrix}.
\]

Taking the time interval as $h\in [0.25, 0.10, 0.05]$, then the observations are sampled at every time interval of $h$ in the range of $t\in [0, 5]$, thereby generating $n=\left[\frac{5}{h}\right]+1\in \{21, 51, 101\}$ samples; using the signal-to-noise ratio $snr \in \{2.5, 3.5, 5.0\}$ to measure the noise level, then the standard deviation is $\sigma=\frac{1}{\sqrt{snr}}\sqrt{\mathrm{Var}(x_\ell(t))}$. By substituting $n$ and $\sigma$ into equation \eqref{eq:a1}, a total of $3\times 3=9$ scenarios are obtained and in each sample size and signal-to-noise ratio combination scenario, 1000 realizations are replicated.

Before reporting the modelling results, it should be noted from Theorem \ref{them:01} that the equivalent equation  utilized in grey model \eqref{eq:im} is
\begin{align*}
    \begin{cases}
        \frac{d}{dt}y_1 (t) = -0.25 y_{1}(t)+0.70 y_{2}(t)+1.2550 \\
        \frac{d}{dt}y_2 (t) = 0.75 y_{1}(t)-0.25 y_{2}(t)-0.4625
    \end{cases},~
    \begin{bmatrix}
        y_1(0) \\
        y_2(0)
    \end{bmatrix}
    =\begin{bmatrix}
        1.20 \\
        0.35
    \end{bmatrix},~
    t\in[0, 5].
\end{align*}

\subsection{Evaluation of parameter estimation performance}\label{sec:5-2}

The estimations of structural parameters and initial vectors obtained from grey modelling and integral matching techniques are summarized in Table \ref{tbl:4}. Here, the structural parameter estimation matrices obtained from these two approaches are identical, validating Proposition \ref{pro:1}.

On the basis of the results in Table \ref{tbl:4}, we may roughly conclude the following: the estimates of parameters, including structural parameters and initial values, tend to their corresponding true values, and meanwhile, the sample standard deviations of estimates get smaller with the increase of sample size, partially implying the asymptotic behaviour of these two approaches; the estimates of the initial vector obtained from integral matching are superior to those obtained from grey modelling in sample mean and sample standard deviation terms, indicating that integral matching has higher accuracy and robustness.
To be specific, assuming that the signal-to-noise ratio remains fixed, then the 9 scenarios are reshaped to be 3 groups; and in each group, as the sample size increases, the sample standard deviations of the estimates reduce successively, validating the efficiency of the estimators; whereas the sample means of the estimates do not get closer to their true values monotonically with the increase of sample size, indicating the likely existence of asymptotic bias on the estimates.

\begin{table}[ht]
    \centering
    \footnotesize
    \begin{threeparttable}
        \caption{Sample means (sample standard deviations) of the estimated parameters obtained from grey modelling and integral matching approaches in 1000 runs with the sample size $n=21, 51, 101$ and  signal-to-noise ratio $snr=2.5, 3.5, 5.0$. The true parameters are $a_{1,1}=-0.25$, $a_{1,2}=0.70$, $a_{2,1}=0.75$, $a_{2,2}=-0.25$, $\eta_1=1.20$ and $\eta_2=0.35$.}
        \label{tbl:4}
        \setlength{\tabcolsep}{1.8mm}
        \begin{tabular}{l lll lll lll}
            \toprule
            $n$ & $snr$ &
            \multicolumn{4}{l}{Structural parameter matrix \textsuperscript{a}} &
            \multicolumn{2}{l}{Initial vector \textsuperscript{b}} &
            \multicolumn{2}{l}{Initial vector \textsuperscript{c}}  \\
            \cmidrule(lr){3-6} \cmidrule(lr){7-8} \cmidrule(lr){9-10}
        &   & $\hat{a}_{1,1}$ & $\hat{a}_{1,2}$ & $\hat{a}_{2,1}$ & $\hat{a}_{2,2}$
            & $\hat{\eta}_1$ & $\hat{\eta}_2$ & $\hat{\eta}_1$ & $\hat{\eta}_2$ \\ \hline
        21  & 2.5 & $-$0.249(0.606) & 0.699(0.601) & 0.745(0.655) & $-$0.245(0.651) & 1.301(0.681) & 0.509(0.711) & 1.195(0.330) & 0.346(0.368) \\
            & 3.5 & $-$0.251(0.309) & 0.701(0.306) & 0.743(0.334) & $-$0.244(0.332) & 1.278(0.343) & 0.486(0.358) & 1.201(0.168) & 0.351(0.188) \\
            & 5.0 & $-$0.250(0.151) & 0.700(0.150) & 0.744(0.164) & $-$0.245(0.162) & 1.268(0.168) & 0.476(0.175) & 1.201(0.082) & 0.351(0.092) \\
        51  & 2.5 & $-$0.255(0.342) & 0.706(0.341) & 0.742(0.371) & $-$0.241(0.370) & 1.252(0.452) & 0.426(0.465) & 1.200(0.175) & 0.350(0.197) \\
            & 3.5 & $-$0.254(0.174) & 0.704(0.174) & 0.744(0.189) & $-$0.244(0.188) & 1.238(0.229) & 0.411(0.236) & 1.201(0.089) & 0.351(0.100) \\
            & 5.0 & $-$0.252(0.085) & 0.702(0.085) & 0.747(0.092) & $-$0.247(0.092) & 1.231(0.112) & 0.404(0.115) & 1.201(0.044) & 0.351(0.049) \\
        101 & 2.5 & $-$0.236(0.224) & 0.686(0.223) & 0.764(0.242) & $-$0.264(0.241) & 1.223(0.371) & 0.384(0.377) & 1.194(0.115) & 0.343(0.129) \\
            & 3.5 & $-$0.244(0.115) & 0.694(0.114) & 0.756(0.124) & $-$0.257(0.123) & 1.218(0.189) & 0.379(0.192) & 1.197(0.059) & 0.347(0.066) \\
            & 5.0 & $-$0.247(0.056) & 0.697(0.056) & 0.753(0.061) & $-$0.253(0.060) & 1.215(0.093) & 0.376(0.094) & 1.199(0.029) & 0.349(0.032) \\
        \bottomrule
    \end{tabular}
    \begin{tablenotes}
        \item [a] The estimated structural parameter matrices obtained from grey modelling and integral matching are identical.
        \item [b] The estimated initial vector of grey modelling is obtained from
            $\hat{\bm{\eta}}=(\bm{I}_2-\hat{\bm{A}} )^\mathrm{-1}\hat{\bm{c}}$.
        \item [c] The estimated initial vector corresponds to integral matching.
    \end{tablenotes}
    \end{threeparttable}
\end{table}

\begin{figure}[!ht]
    \centering
    \includegraphics[scale=0.9, trim = 0 0 0 0, clip = true]{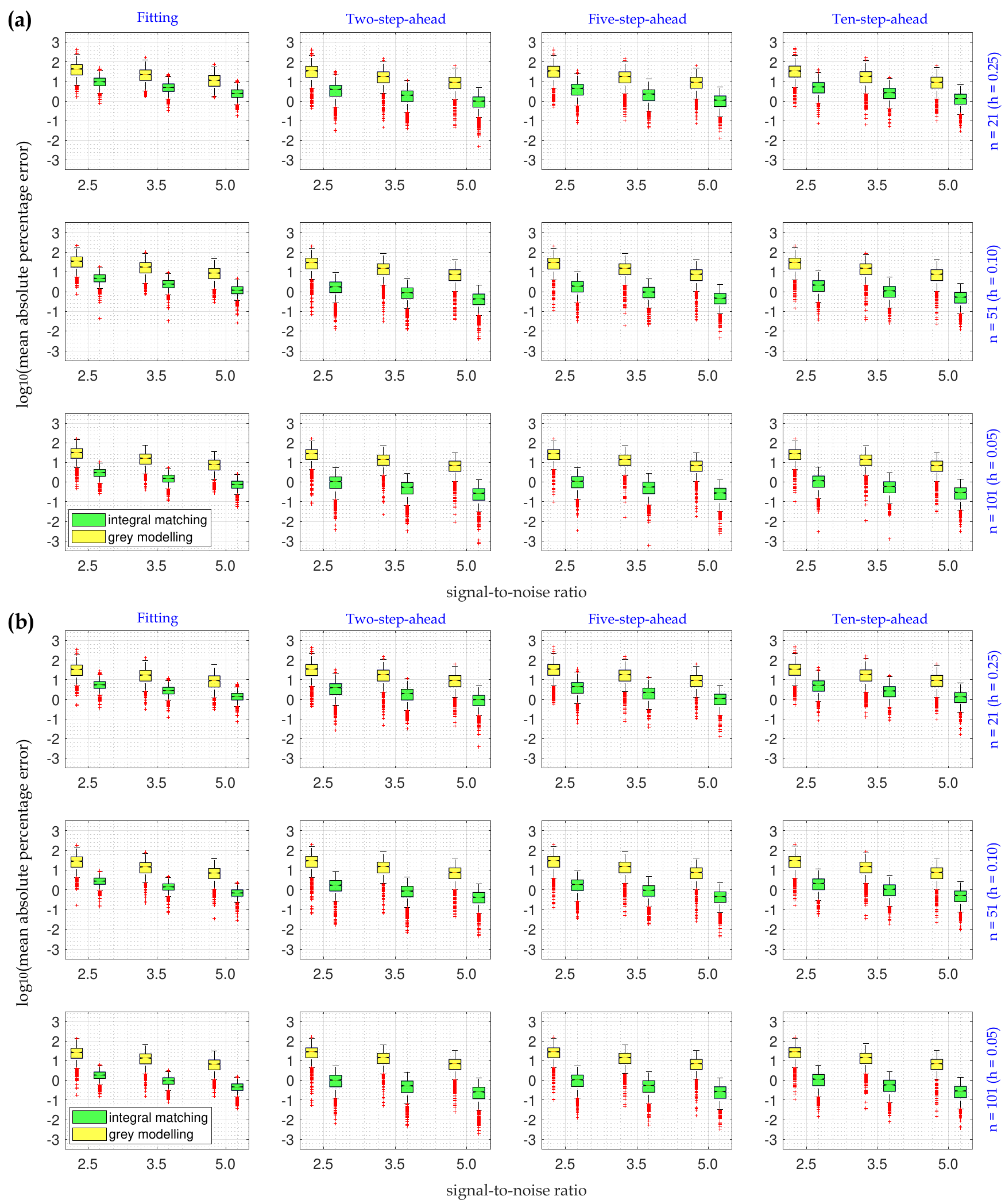}
    \caption{Boxplots of the fitting and multi-step-ahead forecasting errors in 1000 replications: (a) the first component $x_1(t)$, (b) the second component $x_2(t)$.}
    \label{fig:5}
\end{figure}

\subsection{Evaluation of fitting and forecasting accuracy}\label{sec:5-3}

Since grey models are widely used to forecast time series, it is also necessary to analyse the distributions of the fitting and forecasting errors. Figure \ref{fig:5} depicts the boxplots of the fitting errors, two-step-ahead, five-step-ahead, and ten-step-ahead forecasting errors in 1000 simulations.

As a whole, for both the first component $x_1$ in Figure \ref{fig:5}(a) and the second component $x_2$ in Figure \ref{fig:5}(b), integral matching has much smaller fitting and multi-step-ahead forecasting errors than grey modelling in all the sample size and signal-to-noise ratio combination scenarios, indicating that the former outperforms the latter from both fitting and forecasting viewpoints.
Specifically, by the same token as the grouping manner in section \ref{sec:5-2},  we may also conclude the following:
\begin{enumerate}
    \item [(i)]
    For each group with an equal sample size, i.e.,  each row in Figure \ref{fig:5}(a) and \ref{fig:5}(b), the fitting and multi-step-ahead forecasting errors decrease successively with the increase of signal-to-noise ratio, indicating the impact of measurement noise on the forecasting accuracy. Let's take the example of the integral matching method. Letting $n=21$, then, as $snr$ increases from 2.5 through 3.5 to 5.0, the medians of fitting errors decreases from 6.34\% through 3.22\% to 1.58\% (for $x_1$) and from 11.57\% through 5.89\% to 2.89\% (for $x_2$), and meanwhile, the medians of the ten-step-ahead forecasting errors decreases from 6.63\% through 3.24\% to 1.59\% (for $x_1$) and from 6.49\% through 3.27\% to 1.61\% (for $x_2$).
    \item [(ii)]
    For each group with an equal signal-to-noise ratio, i.e., each column in Figure \ref{fig:5}(a) and \ref{fig:5}(b), as the sample size increases, the fitting and multi-step-ahead forecasting errors decrease monotonously, partially implying the asymptotic unbiasedness of forecasts. Likewise, we still use the integral matching method as an example. Letting $snr=2.5$, then, as $n$ increases from 21 through 51 to 101,
    the medians of fitting errors decreases from 6.34\% through 3.10\% to 2.02\% (for $x_1$)
    and from 11.57\% through 5.38\% to 3.44\% (for $x_2$), and meanwhile,
    the medians of the ten-step-ahead forecasting errors decreases from 6.43\% through 2.49\% to 1.38\% (for $x_1$)
    and from 6.49\% through 2.54\% to 1.42\% (for $x_2$).
\end{enumerate}

Overall, combining the conclusions in parameter estimation section \ref{sec:5-2}, we conclude that integral matching consistently outperforms grey modelling in terms of parameter estimation, fitting and multi-step-ahead forecasting, as well as robustness to noise disturbance.

\section{Real-world application}\label{sec:6}

\subsection{Data collection}

Water supply plays an important role in a country's stability and development, while groundwater and surface water are two main water supply sources.
Facing the global groundwater crisis \cite{famiglietti2014global}, the Chinese government and enterprises have invested in many water projects over the past decades (including the waste-water treatment and reuse, rain collection, seawater desalinization, and others) in order to partially replace the groundwater losses that have occurred previously. For simplicity of presentation, the water supplies from these water projects are collectively referred to as the `other water supply sources'.
Forecasting these other water supplies accurately is clearly important in efforts to realize sustainable water resource development and research in the future.
The total amount of other water supplies ($10^9 \mathsf{m}^3$) is collected from the National Bureau of Statistics of China (\url{http://data.stats.gov.cn/english/easyquery.htm?cn=C01}). However, water supply records are incomplete and sparse. Only limited data from 2004 to 2018 are available, as shown in Table \ref{tbl:a}.

\begin{table}[!ht]
    \centering
    \footnotesize
    \caption{The total amount of other water supplies in China from 2004 to 2018.}
    \label{tbl:a}
    \setlength{\tabcolsep}{1.7mm}
    \begin{tabular}{lc cc cc cc cc cc cc cc}
        \toprule
        Year & 2004 & 2005 & 2006 & 2007 & 2008 & 2009 & 2010 & 2011
             & 2012 & 2013 & 2014 & 2015 & 2016 & 2017 & 2018  \\
        Value & 17.20  & 21.96  & 22.70  & 25.70  & 28.74  & 31.16  & 33.12  & 44.80
              & 44.60  & 49.94  & 57.46  & 64.50  & 70.85  & 81.20  & 86.40 \\
        \bottomrule
    \end{tabular}
\end{table}

Using various estimated grey models, 5-step-ahead forecasts are generated in order to be consistent with the Five Year Plan of China, which sets the government's development goals for the next five-year stage. The data are divided into two parts: the data from 2004 to 2015 (12 samples taking 80\% of the 15 samples) are used as a training dataset to construct the models and the data from 2016 to 2018 (3 samples taking 20\% of the 15 samples) are used as test dataset to validate these identified models. In addition, the forecasts in 2019 and 2020 are used to evaluate the post-2018 behavior.

\subsection{Results of integral matching-based and grey models}

For simplicity of presentation, the integral matching-based differential equation model is abbreviated as IMDE. Table \ref{tbl:5} shows the results of IMDE models and the only difference among these models lies in the forcing terms which are expressed by the linear combination of the time polynomial basis functions. The results show the following:
\begin{enumerate}
    \item [(i)]
    In comparison with IMDE$_5$, IMDE$_1$--IMDE$_4$ have lower fitting errors (all the $\mathrm{MAPE}_\mathrm{in}$s less than 5.0\%) and forecasting errors (all the $\mathrm{MAPE}_\mathrm{out}$s less than 5.0\%), and meanwhile, IMDE$_5$ has more complex structure, indicating that IMDE$_1$--IMDE$_4$ are superior to IMDE$_5$.
    \item [(ii)]
    There are only small differences in the fitting error $\mathrm{MAPE}_\mathrm{in}$s among IMDE$_1$--IMDE$_4$, whereas IMDE$_1$, IMDE$_3$ and IMDE$_4$ have much lower forecasting errors in comparison with IMDE$_2$, indicating that IMDE$_2$ is inferior to the other three.
    \item [(iii)]
    The fitting and forecasting errors of IMDE$_3$ and IMDE$_4$ are equal to each other, respectively, whereas IMDE$_4$ has an additional term $b_2t^2$, indicating that IMDE$_3$ is more parsimonious (parametrically efficient) and so better according to the Occam's razor principle \cite{young1996simplicity}. In fact, the estimated parameters of IMDE$_4$ are $\hat{a}=-0.0395$, $\hat{c}=0.4509$, $\hat{b}_1=0.7717$, $\hat{b}_2=-0.0018$, $\hat{\eta}=20.9025$ where the estimates of $\hat{b}_2$ is close to 0; and the other estimates are close to those of IMDE$_3$ with $\hat{a}=-0.0458$, $\hat{c}=0.5761$, $\hat{b}_1=0.7730$, $\hat{\eta}=20.8931$.
    \item [(iv)]
    IMDE$_3$ has a higher fitting error $\mathrm{MAPE}_\mathrm{in}$ but lower forecasting error $\mathrm{MAPE}_\mathrm{out}$ than IMDE$_1$. The reason for the former is that the estimated initial value of IMDE$_3$ is much greater than the true value. Without consideration of the first sample, the $\mathrm{MAPE}_\mathrm{in}$s of IMDE$_1$ and IMDE$_3$ in the period from 2005 to 2015 are 3.24\% and 2.85\%, respectively, indicating that IMDE$_3$ is superior in terms of fitting and forecasting accuracies.
    In fact, the estimated parameters of IMDE$_1$ are $\hat{a}=0.1144$ and $\hat{\eta}=18.2176$, resulting in a time response function expressed as $x(t)=16.2482\exp(0.1144t)$. This increases too fast to be used in medium- and long-term prediction (the $\mathrm{APE}=4.61\%$ in 2018 also tends to confirm this).
\end{enumerate}

Overall, IMDE$_3$ is the best among the five IMDE models. Besides, corresponding to the model structure of IMDE$_3$, we consider the second-order grey polynomial model \cite{wei2018optimal}, abbreviated as GPM(1,1,2) in Table \ref{tbl:5}, and analyse the similarities and differences, as shown in Table \ref{tbl:5} and Figure \ref{fig:7}(a). Since the fitting values in 2004 are actually the estimated initial values, the differences between initial values are the main reason for the differences in performance. The details are present in the following:

The expressions of IMDE$_3$ and GPM(1,1,2) models are
\[
    \frac{d}{dt}x_\mathrm{m}(t)=-0.04578x_\mathrm{m}(t)+0.7730t+0.5761,~ x(1)=20.8931,~ t\geq 1
\]
and
\[
    \frac{d}{dt}y_\mathrm{g}(t)=-0.04578y_\mathrm{g}(t)+0.3865t^2+0.9626t+20.6123,~ y(1)=21.5509,~ t\geq 1,
\]
respectively; then, the time response functions are
\[
    x_\mathrm{m}(t)=16.8847 t + 377.1157\exp(-0.04578t) - 356.2318,~ t\geq 1
\]
and
\[
    y_\mathrm{g}(t) = 8.4424t^2 - 347.7895 t - 8046.2287\exp(-0.04578 t)  + 8047.0682,~ t\geq 1,
\]
which leads to the restored time response function expressed as
\[
    x_\mathrm{g}(t)=\frac{y_\mathrm{g}(t)-y_\mathrm{g}(t-1)}{t-(t-1)}
        = 16.8847t + 376.9255\exp(-0.04578t) - 356.2318,~ t\geq 2.
\]

By comparing $x_\mathrm{m}(t)$ with $x_\mathrm{g}(t)$, IMDE$_3$ and GPM(1,1,2) models have the same time response functions except for the coefficients of the exponential terms which are caused by the estimated initial values. Overall, IMDE$_3$ has simpler modelling procedures and is easier to explain, so it is probably the best one to use in this application.

\begin{table}[!ht]
    \centering
    \footnotesize
    \begin{threeparttable}
        \caption{Fitting and forecasting results obtained from integral matching-based and grey models.}
        \label{tbl:5}
        \setlength{\tabcolsep}{2.6mm}
        \begin{tabular}{lr rr rr rr rr rr r}
        \toprule
        Year & \multicolumn{2}{l}{IMDE$_1$ \textsuperscript{i}} & \multicolumn{2}{l}{IMDE$_2$ \textsuperscript{ii}} &
           \multicolumn{2}{l}{IMDE$_3$ \textsuperscript{iii}} & \multicolumn{2}{l}{IMDE$_4$ \textsuperscript{iv}} &
           \multicolumn{2}{l}{IMDE$_5$ \textsuperscript{v}} & \multicolumn{2}{l}{GPM(1,1,2) \textsuperscript{vi}} \\
           \cmidrule(lr){2-3} \cmidrule(lr){4-5} \cmidrule(lr){6-7} \cmidrule(lr){8-9} \cmidrule(lr){10-11} \cmidrule(lr){12-13}
 & Value & APE & Value & APE & Value & APE & Value & APE & Value & APE & Value & APE \\ \hline
2004 & 18.22  & 5.92  & 19.14  & 11.27  & 20.89  & 21.47  & 20.90  & 21.53  & 22.98  & 33.60  & 21.55  & 25.30 \\
2005 & 20.43  & 6.99  & 21.12  & 3.84  & 21.66  & 1.38  & 21.67  & 1.33  & 22.32  & 1.66  & 21.48  & 2.17 \\
2006 & 22.90  & 0.89  & 23.37  & 2.97  & 23.14  & 1.95  & 23.15  & 2.00  & 23.28  & 2.57  & 22.98  & 1.22 \\
2007 & 25.68  & 0.09  & 25.95  & 0.97  & 25.32  & 1.49  & 25.33  & 1.45  & 25.47  & 0.90  & 25.16  & 2.10 \\
2008 & 28.79  & 0.17  & 28.89  & 0.52  & 28.15  & 2.05  & 28.16  & 2.02  & 28.55  & 0.64  & 28.00  & 2.57 \\
2009 & 32.28  & 3.59  & 32.24  & 3.47  & 31.61  & 1.45  & 31.62  & 1.47  & 32.29  & 3.63  & 31.47  & 0.99 \\
2010 & 36.19  & 9.27  & 36.07  & 8.90  & 35.68  & 7.72  & 35.68  & 7.73  & 36.50  & 10.21  & 35.54  & 7.30 \\
2011 & 40.58  & 9.43  & 40.43  & 9.74  & 40.31  & 10.02  & 40.32  & 10.01  & 41.10  & 8.26  & 40.18  & 10.31 \\
2012 & 45.49  & 2.01  & 45.42  & 1.83  & 45.50  & 2.01  & 45.50  & 2.02  & 46.09  & 3.35  & 45.37  & 1.73 \\
2013 & 51.01  & 2.14  & 51.10  & 2.33  & 51.20  & 2.53  & 51.21  & 2.54  & 51.60  & 3.32  & 51.08  & 2.29 \\
2014 & 57.19  & 0.47  & 57.59  & 0.22  & 57.41  & 0.08  & 57.42  & 0.08  & 57.86  & 0.69  & 57.30  & 0.28 \\
2015 & 64.12  & 0.58  & 64.99  & 0.76  & 64.10  & 0.62  & 64.10  & 0.62  & 65.23  & 1.14  & 63.99  & 0.79 \\
\multicolumn{2}{l}{MAPE$_\mathrm{in}$ (\%)} & 3.46  &  & 3.90  &  & 4.40  &  & 4.40  &  & 5.83  &  & 4.75 \\
2016 & 71.90  & 1.47  & 73.43  & 3.65  & 71.24  & 0.55  & 71.24  & 0.55  & 74.25  & 4.80  & 71.14  & 0.40 \\
2017 & 80.61  & 0.73  & 83.07  & 2.30  & 78.82  & 2.93  & 78.81  & 2.94  & 85.60  & 5.42  & 78.72  & 3.06 \\
2018 & 90.38  & 4.61  & 94.07  & 8.87  & 86.81  & 0.48  & 86.80  & 0.46  & 100.15  & 15.91  & 86.72  & 0.37 \\
\multicolumn{2}{l}{MAPE$_\mathrm{out}$ (\%)} & 2.27  &  & 4.94  &  & \textbf{1.32}  &  & \textbf{1.32}  &  & 8.71  &  & \textbf{1.28} \\
2019 & 101.33  &      & 106.61  &      & 95.21  &       & 95.18  &       & 118.98  &        & 95.12   &   \\
2020 & 113.61  &      & 120.93  &      & 103.98  &      & 103.93  &      & 143.40  &        & 103.89  &   \\
            \bottomrule
        \end{tabular}
        \begin{tablenotes}
            \item Note that the model structures are (i) $\frac{d}{dt}x(t)=ax(t)$,
            (ii) $\frac{d}{dt}x(t)=ax(t)+c$,
            (iii) $\frac{d}{dt}x(t)=ax(t)+b_1t+c$,
            (iv) $\frac{d}{dt}x(t)=ax(t)+b_1t+b_2t^2+c$,
            (v) $\frac{d}{dt}x(t)=ax(t)+b_1t+b_2t^2+b_3t^3+c$,
             and (vi) $\frac{d}{dt}y(t)=ay(t)+b_1t+b_2t^2+c$.
        \end{tablenotes}
    \end{threeparttable}
\end{table}

\begin{figure}[!ht]
    \centering
    \includegraphics[scale=0.72, trim = 0 0 0 0, clip = true]{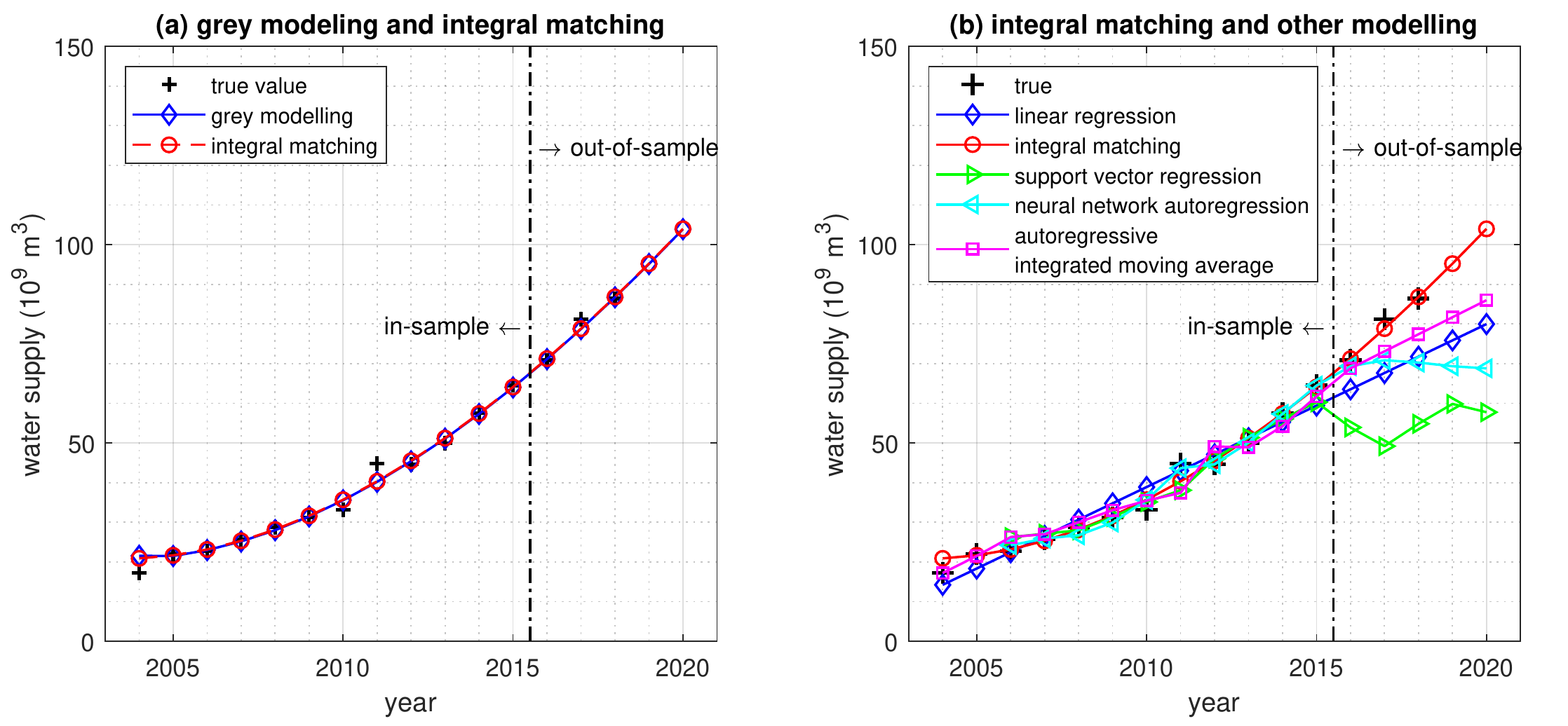}
    \caption{Fitting and forecasting results of other water supply obtained from various models.}
    \label{fig:7}
\end{figure}

\subsection{Comparison with other classical models}

The integral matching-based model (IMDE$_3$) is compared with the linear regression (LR), autoregressive integrated moving average (ARIMA), neural network autoregression (NNAR) and support vector regression (SVR). The calculation results associated with the absolute percentage errors (APEs) at each time instant are shown in Table \ref{tbl:6} and Figure \ref{fig:7}(b). ARIMA and NNAR models are implemented by the $\mathsf{auto.arima}$ and $\mathsf{nnetar}$ functions in $\mathsf{forecast}$ package \cite{hyndman2008automatic}, and SVR model by the $\mathsf{svm}$ function in $\mathsf{e1071}$ package \cite{dimitriadou2019package} in $\mathsf{R}$ software.

\begin{table}[!ht]
    \centering
    \footnotesize
    \begin{threeparttable}
        \caption{Fitting and forecasting results obtained from other classical models.}
        \label{tbl:6}
        \setlength{\tabcolsep}{3.9mm}
        \begin{tabular}{lr rr rr rr rr rr r}
            \toprule
            Year & \multicolumn{2}{l}{IMDE$_3$ \textsuperscript{i}} & \multicolumn{2}{l}{LR \textsuperscript{ii}} &
            \multicolumn{2}{l}{ARIMA \textsuperscript{iii}} & \multicolumn{2}{l}{NNAR \textsuperscript{iv}} &
            \multicolumn{2}{l}{SVR \textsuperscript{v}} \\
            \cmidrule(lr){2-3} \cmidrule(lr){4-5} \cmidrule(lr){6-7} \cmidrule(lr){8-9} \cmidrule(lr){10-11}
            & Value & APE & Value & APE & Value & APE & Value & APE & Value & APE \\ \hline
            2004 & 20.89  & 21.47  & 14.22  & 17.30  & 17.19  & 0.07  &  &  &  &  \\
            2005 & 21.66  & 1.38  & 18.33  & 16.51  & 21.50  & 2.09  &  &  &  &  \\
            2006 & 23.14  & 1.95  & 22.44  & 1.13  & 26.26  & 15.68  & 24.21  & 6.66  & 26.26  & 15.70 \\
            2007 & 25.32  & 1.49  & 26.55  & 3.31  & 27.00  & 5.06  & 25.88  & 0.71  & 27.11  & 5.47 \\
            2008 & 28.15  & 2.05  & 30.66  & 6.68  & 30.00  & 4.38  & 26.74  & 6.96  & 27.98  & 2.64 \\
            2009 & 31.61  & 1.45  & 34.77  & 11.58  & 33.04  & 6.03  & 29.93  & 3.96  & 31.33  & 0.55 \\
            2010 & 35.68  & 7.72  & 38.88  & 17.38  & 35.46  & 7.07  & 35.63  & 7.57  & 35.11  & 6.00 \\
            2011 & 40.31  & 10.02  & 42.99  & 4.05  & 37.42  & 16.47  & 43.64  & 2.60  & 38.07  & 15.02 \\
            2012 & 45.50  & 2.01  & 47.10  & 5.59  & 49.10  & 10.09  & 44.63  & 0.07  & 46.02  & 3.18 \\
            2013 & 51.20  & 2.53  & 51.20  & 2.53  & 48.90  & 2.08  & 50.11  & 0.34  & 51.36  & 2.84 \\
            2014 & 57.41  & 0.08  & 55.31  & 3.74  & 54.24  & 5.60  & 57.32  & 0.24  & 56.05  & 2.46 \\
            2015 & 64.10  & 0.62  & 59.42  & 7.87  & 61.76  & 4.25  & 64.46  & 0.05  & 60.25  & 6.59 \\
            \multicolumn{2}{l}{MAPE$_\mathrm{in}$ (\%)} & 4.40  &  & 8.14  &  & 6.57  &  & 2.92  &  & 6.05 \\
            2016 & 71.24  & 0.55  & 63.53  & 10.33  & 68.80  & 2.89  & 69.34  & 2.13  & 53.91  & 23.91 \\
            2017 & 78.82  & 2.93  & 67.64  & 16.70  & 73.10  & 9.98  & 70.85  & 12.74  & 49.19  & 39.42 \\
            2018 & 86.81  & 0.48  & 71.75  & 16.96  & 77.40  & 10.42  & 70.27  & 18.67  & 54.83  & 36.54 \\
            \multicolumn{2}{l}{MAPE$_\mathrm{out}$ (\%)} & \textbf{1.32}  &  & 14.66  &  & 7.76  &  & 11.18  &  & 33.29 \\
            2019 & 95.21  &  & 75.86  &  & 81.70  &  & 69.34  &  & 59.84  &  \\
            2020 & 103.98  &  & 79.97  &  & 86.00  &  & 68.80  &  & 57.75  &  \\
            \bottomrule
        \end{tabular}
        \begin{tablenotes}
            \item Note that (i) IMDE$_3$ is the copy of that in Table \ref{tbl:5} for easier comparison; (ii) the coefficient of determination is 0.9548; (iii) the model order is arima(0,1,0); (iv) the model type is feed-forward neural network with 2 lagged inputs and 3 nodes in the only hidden layer; (v) the model type is $\epsilon$-SVR with the embedding dimension equal to 2 and the kernel type being radial basis function.
        \end{tablenotes}
    \end{threeparttable}
\end{table}

Table \ref{tbl:6} shows that all the MAPE$_\mathrm{in}$s of the five models are less than 10\% but their MAPE$_\mathrm{out}$s vary widely, indicating that they have almost the same fitting but different forecasting performance.
The high coefficient of determination of LR indicates that the model order is appropriate, but it does not consider the autocorrelation in time series and thus results in poor forecasting performance. On the contrary, ARIMA consider the autocorrelation, thereby having a better performance than LR. NNAR obtains the smallest fitting error but a higher forecasting error, and SVR performs better in fitting but behaves poorly in forecasting. The reason for good fitting but bad forecasting performance of NNAR and SVR may be that the small modelling sample size leads to the over-fitting, although we have tried our best to avoid this by simplifying the model structures in the modelling process.

Overall, the integral matching-based model, IMDE$_3$, has acceptable fitting error (MAPE$_\mathrm{in}=4.40\%$) and minimal forecasting error (MAPE$_\mathrm{out}=1.32\%$) and so can be considered the best in this example (see also Figure \ref{fig:7}(b)).

\section{Conclusions}\label{sec:7}


This paper has presented a unified framework for continuous-time grey models that subsume the single-variable, multi-variable, and multi-output grey models as special cases. It has shown that this unified form can be reduced into an equivalent differential equation and how the integral operator links the differential equations of grey and reduced models. It has also shown that structural parameters and initial condition of the reduced model can be simultaneously estimated by the integral matching approach. We see, therefore, that grey and reduced models are both concerned with the data-based modelling in terms of differential equations. The former uses the Cusum operator explicitly, while the latter implicitly employs the Cusum operator via a numerical discretization scheme of the integral operator. As pointed out in Section \ref{sec:4-2}, in practical applications, we recommend the integral matching-based reduction form due to its advantages over grey models, including higher accuracy, simpler modelling procedures and easier explanation of modelling results.


This paper has not described the impact of observed errors on the estimates of grey models and reduced models. The presence of errors-in-variables (also known as measurement error in statistics) induces an asymptotic bias on the parameter estimates which is a function of the signal-to-noise ratio on ${\bm{x}}(t_k)$ and is zero only when there is no noise. Other methods of estimation have advantages in this regard, such as the refined instrumental variable estimation; another possibility is two-stage linear least squares where the original time-series observations are de-noised at the first stage and then the parameters are estimated using linear least squares at the second stage.
Besides, both grey and reduced models use linear least squares-based statistical estimation but do not consider the statistical properties of their corresponding estimators. Consequently, future research should be focused on the statistical properties concerning minimum variance and asymptotic analysis, that is, the asymptotic consistency and efficiency respectively measuring the convergence and variance of the estimators as the time interval $h$ tends to 0.

\section*{Data availability}

Data and codes could be seen in the following link: \url{https://github.com/weibl9/CTGMs}.

\section*{Acknowledgements}

This work was supported by the National Natural Science Foundation of China (72171116, 71671090) and the Fundamental Research Funds for the Central Universities of China (NP2020022).

This research was completed during the first author's visit at Lancaster University; thanks to Wlodek Tych for hosting my visit. The authors are most grateful to Peter C. Young for his valuable suggestions. This does not mean that he necessarily shares the views stated in the paper, and the authors, of course, responsible for any errors or omissions.

\bibliographystyle{model1-num-names}
\bibliography{ctgmBib}

\begin{thebibliography}{66}
\expandafter\ifx\csname natexlab\endcsname\relax\def\natexlab#1{#1}\fi
\providecommand{\url}[1]{\texttt{#1}}
\providecommand{\href}[2]{#2}
\providecommand{\path}[1]{#1}
\providecommand{\DOIprefix}{doi:}
\providecommand{\ArXivprefix}{arXiv:}
\providecommand{\URLprefix}{URL: }
\providecommand{\Pubmedprefix}{pmid:}
\providecommand{\doi}[1]{\href{http://dx.doi.org/#1}{\path{#1}}}
\providecommand{\Pubmed}[1]{\href{pmid:#1}{\path{#1}}}
\providecommand{\bibinfo}[2]{#2}
\ifx\xfnm\relax \def\xfnm[#1]{\unskip,\space#1}\fi
\bibitem[{Deng(1982)}]{Deng1982Control}
\bibinfo{author}{J.~Deng},
\newblock \bibinfo{title}{Control problems of grey systems},
\newblock \bibinfo{journal}{Systems and Control Letters} \bibinfo{volume}{1}
  (\bibinfo{year}{1982}) \bibinfo{pages}{288--294}.
\bibitem[{Deng(1984)}]{deng1984grey}
\bibinfo{author}{J.~Deng},
\newblock \bibinfo{title}{Grey theory and methods in social and economic system
  (in {C}hinese)},
\newblock \bibinfo{journal}{Social Sciences in China}  (\bibinfo{year}{1984})
  \bibinfo{pages}{47--60}.
\bibitem[{Liu et~al.(2017)Liu, Yang, and Forrest}]{Liu2017Grey}
\bibinfo{author}{S.~Liu}, \bibinfo{author}{Y.~Yang},
  \bibinfo{author}{J.~Forrest}, \bibinfo{title}{Grey data analysis: {M}ethods,
  models and applications}, \bibinfo{publisher}{Springer-Verlag, Singapore},
  \bibinfo{year}{2017}.
\bibitem[{Xie and Wang(2017)}]{Xie2017A}
\bibinfo{author}{N.~Xie}, \bibinfo{author}{R.~Wang},
\newblock \bibinfo{title}{A historic review of grey forecasting models},
\newblock \bibinfo{journal}{Journal of Grey System} \bibinfo{volume}{29}
  (\bibinfo{year}{2017}) \bibinfo{pages}{1--29}.
\bibitem[{Li et~al.(2011)Li, Chang, Chen, and Chen}]{li2011extended}
\bibinfo{author}{D.-C. Li}, \bibinfo{author}{C.-J. Chang},
  \bibinfo{author}{W.-C. Chen}, \bibinfo{author}{C.-C. Chen},
\newblock \bibinfo{title}{An extended grey forecasting model for
  omnidirectional forecasting considering data gap difference},
\newblock \bibinfo{journal}{Applied Mathematical Modelling}
  \bibinfo{volume}{35} (\bibinfo{year}{2011}) \bibinfo{pages}{5051--5058}.
\bibitem[{Ne{\v{s}}et{\v{r}}il and Matou{\v{s}}ek(2009)}]{ne2009invitation}
\bibinfo{author}{J.~Ne{\v{s}}et{\v{r}}il}, \bibinfo{author}{J.~Matou{\v{s}}ek},
  \bibinfo{title}{Invitation to discrete mathematics}, \bibinfo{edition}{2nd}
  ed., \bibinfo{publisher}{Oxford University Press}, \bibinfo{year}{2009}.
\bibitem[{Xiao et~al.(2014)Xiao, Guo, and Mao}]{Xiao2014The}
\bibinfo{author}{X.~Xiao}, \bibinfo{author}{H.~Guo}, \bibinfo{author}{S.~Mao},
\newblock \bibinfo{title}{The modeling mechanism, extension and optimization of
  grey {GM}(1,1) model},
\newblock \bibinfo{journal}{Applied Mathematical Modelling}
  \bibinfo{volume}{38} (\bibinfo{year}{2014}) \bibinfo{pages}{1896--1910}.
\bibitem[{Yang et~al.(2019)Yang, Fang, Yang, Mba, and Li}]{yang2019novel}
\bibinfo{author}{X.~Yang}, \bibinfo{author}{Z.~Fang},
  \bibinfo{author}{Y.~Yang}, \bibinfo{author}{D.~Mba}, \bibinfo{author}{X.~Li},
\newblock \bibinfo{title}{A novel multi-information fusion grey model and its
  application in wear trend prediction of wind turbines},
\newblock \bibinfo{journal}{Applied Mathematical Modelling}
  \bibinfo{volume}{71} (\bibinfo{year}{2019}) \bibinfo{pages}{543--557}.
\bibitem[{Colwell et~al.(2004)Colwell, Mao, and
  Chang}]{colwell2004interpolating}
\bibinfo{author}{R.~K. Colwell}, \bibinfo{author}{C.~Mao},
  \bibinfo{author}{J.~Chang},
\newblock \bibinfo{title}{Interpolating, extrapolating, and comparing
  incidence-based species accumulation curves},
\newblock \bibinfo{journal}{Ecology} \bibinfo{volume}{85}
  (\bibinfo{year}{2004}) \bibinfo{pages}{2717--2727}.
\bibitem[{Chatterjee and Qiu(2009)}]{chatterjee2009distribution}
\bibinfo{author}{S.~Chatterjee}, \bibinfo{author}{P.~Qiu},
\newblock \bibinfo{title}{Distribution-free cumulative sum control charts using
  bootstrap-based control limits},
\newblock \bibinfo{journal}{The Annals of Applied Statistics}
  \bibinfo{volume}{3} (\bibinfo{year}{2009}) \bibinfo{pages}{349--369}.
\bibitem[{Zhao et~al.(2012)Zhao, Wang, Zhao, and Su}]{Zhao2012Using}
\bibinfo{author}{Z.~Zhao}, \bibinfo{author}{J.~Wang},
  \bibinfo{author}{J.~Zhao}, \bibinfo{author}{Z.~Su},
\newblock \bibinfo{title}{Using a grey model optimized by differential
  evolution algorithm to forecast the per capita annual net income of rural
  households in {China}},
\newblock \bibinfo{journal}{Omega} \bibinfo{volume}{40} (\bibinfo{year}{2012})
  \bibinfo{pages}{525--532}.
\bibitem[{Li et~al.(2009)Li, Yeh, and Chang}]{Li2009An}
\bibinfo{author}{D.-C. Li}, \bibinfo{author}{C.-W. Yeh}, \bibinfo{author}{C.~J.
  Chang},
\newblock \bibinfo{title}{An improved grey-based approach for early
  manufacturing data forecasting},
\newblock \bibinfo{journal}{Computers and Industrial Engineering}
  \bibinfo{volume}{57} (\bibinfo{year}{2009}) \bibinfo{pages}{1161--1167}.
\bibitem[{Li et~al.(2012)Li, Chang, Chen, and Chen}]{Li2012Forecasting}
\bibinfo{author}{D.-C. Li}, \bibinfo{author}{C.-J. Chang},
  \bibinfo{author}{C.-C. Chen}, \bibinfo{author}{W.~C. Chen},
\newblock \bibinfo{title}{Forecasting short-term electricity consumption using
  the adaptive grey-based approach---{An Asian} case},
\newblock \bibinfo{journal}{Omega} \bibinfo{volume}{40} (\bibinfo{year}{2012})
  \bibinfo{pages}{767--773}.
\bibitem[{Chang et~al.(2015)Chang, Li, Huang, and Chen}]{chang2015novel}
\bibinfo{author}{C.-J. Chang}, \bibinfo{author}{D.-C. Li},
  \bibinfo{author}{Y.-H. Huang}, \bibinfo{author}{C.-C. Chen},
\newblock \bibinfo{title}{A novel gray forecasting model based on the box plot
  for small manufacturing data sets},
\newblock \bibinfo{journal}{Applied Mathematics and Computation}
  \bibinfo{volume}{265} (\bibinfo{year}{2015}) \bibinfo{pages}{400--408}.
\bibitem[{Shih et~al.(2011)Shih, Hsu, Yeh, and Lee}]{shih2011grey}
\bibinfo{author}{C.-S. Shih}, \bibinfo{author}{Y.-T. Hsu},
  \bibinfo{author}{J.~Yeh}, \bibinfo{author}{P.-C. Lee},
\newblock \bibinfo{title}{Grey number prediction using the grey modification
  model with progression technique},
\newblock \bibinfo{journal}{Applied Mathematical Modelling}
  \bibinfo{volume}{35} (\bibinfo{year}{2011}) \bibinfo{pages}{1314--1321}.
\bibitem[{Dang et~al.(2004)Dang, Liu, and Chen}]{Dang2002The}
\bibinfo{author}{Y.~Dang}, \bibinfo{author}{S.~Liu}, \bibinfo{author}{K.~Chen},
\newblock \bibinfo{title}{The {GM} models that $x(n)$ be taken as initial
  value},
\newblock \bibinfo{journal}{Kybernetes} \bibinfo{volume}{33}
  (\bibinfo{year}{2004}) \bibinfo{pages}{247--254}.
\bibitem[{Xu et~al.(2011)Xu, Tan, Tu, and Qi}]{Xu2011Improvement}
\bibinfo{author}{J.~Xu}, \bibinfo{author}{T.~Tan}, \bibinfo{author}{M.~Tu},
  \bibinfo{author}{L.~Qi},
\newblock \bibinfo{title}{Improvement of grey models by least squares},
\newblock \bibinfo{journal}{Expert Systems with Applications}
  \bibinfo{volume}{38} (\bibinfo{year}{2011}) \bibinfo{pages}{13961--13966}.
\bibitem[{Wang et~al.(2014)Wang, Liu, Tang, Cao, and Li}]{Wang2014Optimization}
\bibinfo{author}{Y.~Wang}, \bibinfo{author}{Q.~Liu}, \bibinfo{author}{J.~Tang},
  \bibinfo{author}{W.~Cao}, \bibinfo{author}{X.~Li},
\newblock \bibinfo{title}{Optimization approach of background value and initial
  item for improving prediction precision of {GM}(1,1) model},
\newblock \bibinfo{journal}{Journal of Systems Engineering and Electronics}
  \bibinfo{volume}{25} (\bibinfo{year}{2014}) \bibinfo{pages}{77--82}.
\bibitem[{Chen and Huang(2013)}]{Chen2013The}
\bibinfo{author}{C.~I. Chen}, \bibinfo{author}{S.~J. Huang},
\newblock \bibinfo{title}{The necessary and sufficient condition for {GM}(1,1)
  grey prediction model},
\newblock \bibinfo{journal}{Applied Mathematics and Computation}
  \bibinfo{volume}{219} (\bibinfo{year}{2013}) \bibinfo{pages}{6152--6162}.
\bibitem[{Liu et~al.(2014)Liu, Xiao, Guo, and Mao}]{Liu2014Error}
\bibinfo{author}{J.~Liu}, \bibinfo{author}{X.~P. Xiao}, \bibinfo{author}{J.~H.
  Guo}, \bibinfo{author}{S.~H. Mao},
\newblock \bibinfo{title}{Error and its upper bound estimation between the
  solutions of {GM}(1,1) grey forecasting models},
\newblock \bibinfo{journal}{Applied Mathematics and Computation}
  \bibinfo{volume}{246} (\bibinfo{year}{2014}) \bibinfo{pages}{648--660}.
\bibitem[{Tien(2009)}]{Tien2009A}
\bibinfo{author}{T.~L. Tien},
\newblock \bibinfo{title}{A new grey prediction model {FGM(1,1)}},
\newblock \bibinfo{journal}{Mathematical and Computer Modelling}
  \bibinfo{volume}{49} (\bibinfo{year}{2009}) \bibinfo{pages}{1416--1426}.
\bibitem[{Yao et~al.(2009)Yao, Liu, and Xie}]{Yao2009On}
\bibinfo{author}{T.~Yao}, \bibinfo{author}{S.~Liu}, \bibinfo{author}{N.~Xie},
\newblock \bibinfo{title}{On the properties of small sample of {GM}(1,1)
  model},
\newblock \bibinfo{journal}{Applied Mathematical Modelling}
  \bibinfo{volume}{33} (\bibinfo{year}{2009}) \bibinfo{pages}{1894--1903}.
\bibitem[{Wu et~al.(2013)Wu, Liu, Yao, and Yan}]{Wu2013The}
\bibinfo{author}{L.~Wu}, \bibinfo{author}{S.~Liu}, \bibinfo{author}{L.~Yao},
  \bibinfo{author}{S.~Yan},
\newblock \bibinfo{title}{The effect of sample size on the grey system model},
\newblock \bibinfo{journal}{Applied Mathematical Modelling}
  \bibinfo{volume}{37} (\bibinfo{year}{2013}) \bibinfo{pages}{6577--6583}.
\bibitem[{Guo and Guo(2009)}]{Guo2009Random}
\bibinfo{author}{R.~Guo}, \bibinfo{author}{D.~Guo},
\newblock \bibinfo{title}{Random fuzzy variable foundation for grey
  differential equation modeling},
\newblock \bibinfo{journal}{Soft Computing} \bibinfo{volume}{13}
  (\bibinfo{year}{2009}) \bibinfo{pages}{185--201}.
\bibitem[{Cui et~al.(2013)Cui, Liu, Zeng, and Xie}]{Cui2013A}
\bibinfo{author}{J.~Cui}, \bibinfo{author}{S.~Liu}, \bibinfo{author}{B.~Zeng},
  \bibinfo{author}{N.~Xie},
\newblock \bibinfo{title}{A novel grey forecasting model and its optimization},
\newblock \bibinfo{journal}{Applied Mathematical Modelling}
  \bibinfo{volume}{37} (\bibinfo{year}{2013}) \bibinfo{pages}{4399--4406}.
\bibitem[{Chen and Yu(2014)}]{chen2014foundation}
\bibinfo{author}{P.-Y. Chen}, \bibinfo{author}{H.-M. Yu},
\newblock \bibinfo{title}{Foundation settlement prediction based on a novel
  {NGM} model},
\newblock \bibinfo{journal}{Mathematical Problems in Engineering}
  \bibinfo{volume}{2014} (\bibinfo{year}{2014}) \bibinfo{pages}{242809}.
\bibitem[{Ding et~al.(2021)Ding, Li, Wu, and Zhou}]{ding2021application}
\bibinfo{author}{S.~Ding}, \bibinfo{author}{R.~Li}, \bibinfo{author}{S.~Wu},
  \bibinfo{author}{W.~Zhou},
\newblock \bibinfo{title}{Application of a novel structure-adaptative grey
  model with adjustable time power item for nuclear energy consumption
  forecasting},
\newblock \bibinfo{journal}{Applied Energy} \bibinfo{volume}{298}
  (\bibinfo{year}{2021}) \bibinfo{pages}{117114}.
\bibitem[{Luo and Wei(2017)}]{Luo2017Grey}
\bibinfo{author}{D.~Luo}, \bibinfo{author}{B.~Wei},
\newblock \bibinfo{title}{Grey forecasting model with polynomial term and its
  optimization},
\newblock \bibinfo{journal}{Journal of Grey System} \bibinfo{volume}{29}
  (\bibinfo{year}{2017}) \bibinfo{pages}{58--69}.
\bibitem[{Ma et~al.(2016)Ma, Hu, and Liu}]{Ma2016A}
\bibinfo{author}{X.~Ma}, \bibinfo{author}{Y.~Hu}, \bibinfo{author}{Z.~Liu},
\newblock \bibinfo{title}{A novel kernel regularized nonhomogeneous grey model
  and its applications},
\newblock \bibinfo{journal}{Communications in Nonlinear Science and Numerical
  Simulation} \bibinfo{volume}{48} (\bibinfo{year}{2016})
  \bibinfo{pages}{51--62}.
\bibitem[{Evans(2014)}]{evans2014alternative}
\bibinfo{author}{M.~Evans},
\newblock \bibinfo{title}{An alternative approach to estimating the parameters
  of a generalised {G}rey {V}erhulst model: {An} application to steel intensity
  of use in the {UK}},
\newblock \bibinfo{journal}{Expert Systems with Applications}
  \bibinfo{volume}{41} (\bibinfo{year}{2014}) \bibinfo{pages}{1236--1244}.
\bibitem[{Chen et~al.(2008)Chen, Chen, and Chen}]{chen2008forecasting}
\bibinfo{author}{C.-I. Chen}, \bibinfo{author}{H.~Chen}, \bibinfo{author}{S.-P.
  Chen},
\newblock \bibinfo{title}{Forecasting of foreign exchange rates of {T}aiwan's
  major trading partners by novel nonlinear {G}rey {B}ernoulli model
  {NGBM}(1,1)},
\newblock \bibinfo{journal}{Communications in Nonlinear Science and Numerical
  Simulation} \bibinfo{volume}{13} (\bibinfo{year}{2008})
  \bibinfo{pages}{1194--1204}.
\bibitem[{Yang and Xie(2021)}]{yang2021integral}
\bibinfo{author}{L.~Yang}, \bibinfo{author}{N.~Xie},
\newblock \bibinfo{title}{Integral matching-based nonlinear grey {Bernoulli}
  model for forecasting the coal consumption in {China}},
\newblock \bibinfo{journal}{Soft Computing} \bibinfo{volume}{25}
  (\bibinfo{year}{2021}) \bibinfo{pages}{5209--5223}.
\bibitem[{Tien(2012)}]{tien2012research}
\bibinfo{author}{T.-L. Tien},
\newblock \bibinfo{title}{A research on the grey prediction model {GM}(1,n)},
\newblock \bibinfo{journal}{Applied Mathematics and Computation}
  \bibinfo{volume}{218} (\bibinfo{year}{2012}) \bibinfo{pages}{4903--4916}.
\bibitem[{Zeng et~al.(2016)Zeng, Luo, Liu, Bai, and Li}]{zeng2016development}
\bibinfo{author}{B.~Zeng}, \bibinfo{author}{C.~Luo}, \bibinfo{author}{S.~Liu},
  \bibinfo{author}{Y.~Bai}, \bibinfo{author}{C.~Li},
\newblock \bibinfo{title}{Development of an optimization method for the
  {GM(1,N)} model},
\newblock \bibinfo{journal}{Engineering Applications of Artificial
  Intelligence} \bibinfo{volume}{55} (\bibinfo{year}{2016})
  \bibinfo{pages}{353--362}.
\bibitem[{Ma and Liu(2018)}]{ma2018kernel}
\bibinfo{author}{X.~Ma}, \bibinfo{author}{Z.-b. Liu},
\newblock \bibinfo{title}{The kernel-based nonlinear multivariate grey model},
\newblock \bibinfo{journal}{Applied Mathematical Modelling}
  \bibinfo{volume}{56} (\bibinfo{year}{2018}) \bibinfo{pages}{217--238}.
\bibitem[{Wang and Ye(2017)}]{wang2017forecasting}
\bibinfo{author}{Z.-X. Wang}, \bibinfo{author}{D.-J. Ye},
\newblock \bibinfo{title}{Forecasting {Chinese} carbon emissions from fossil
  energy consumption using non-linear grey multivariable models},
\newblock \bibinfo{journal}{Journal of Cleaner Production}
  \bibinfo{volume}{142} (\bibinfo{year}{2017}) \bibinfo{pages}{600--612}.
\bibitem[{Yu et~al.(2021)Yu, Ma, Wu, Wang, and Zeng}]{yu2021novel}
\bibinfo{author}{L.~Yu}, \bibinfo{author}{X.~Ma}, \bibinfo{author}{W.~Wu},
  \bibinfo{author}{Y.~Wang}, \bibinfo{author}{B.~Zeng},
\newblock \bibinfo{title}{A novel elastic net-based {NGBMC}(1,n) model with
  multi-objective optimization for nonlinear time series forecasting},
\newblock \bibinfo{journal}{Communications in Nonlinear Science and Numerical
  Simulation} \bibinfo{volume}{96} (\bibinfo{year}{2021})
  \bibinfo{pages}{105696}.
\bibitem[{Xiong et~al.(2017)Xiong, Zhang, Zeng, and Yao}]{xiong2017mgm}
\bibinfo{author}{P.~Xiong}, \bibinfo{author}{Y.~Zhang},
  \bibinfo{author}{B.~Zeng}, \bibinfo{author}{T.~Yao},
\newblock \bibinfo{title}{{MGM}(1,m) model based on interval grey number
  sequence and its applications},
\newblock \bibinfo{journal}{Grey Systems: Theory and Application}
  \bibinfo{volume}{7} (\bibinfo{year}{2017}) \bibinfo{pages}{310--319}.
\bibitem[{Guo et~al.(2015)Guo, Liu, Wu, Gao, and Yang}]{guo2015multi}
\bibinfo{author}{X.~Guo}, \bibinfo{author}{S.~Liu}, \bibinfo{author}{L.~Wu},
  \bibinfo{author}{Y.~Gao}, \bibinfo{author}{Y.~Yang},
\newblock \bibinfo{title}{A multi-variable grey model with a self-memory
  component and its application on engineering prediction},
\newblock \bibinfo{journal}{Engineering Applications of Artificial
  Intelligence} \bibinfo{volume}{42} (\bibinfo{year}{2015})
  \bibinfo{pages}{82--93}.
\bibitem[{Guo et~al.(2019)Guo, Liu, and Yang}]{guo2019prediction}
\bibinfo{author}{X.~Guo}, \bibinfo{author}{S.~Liu}, \bibinfo{author}{Y.~Yang},
\newblock \bibinfo{title}{A prediction method for plasma concentration by using
  a nonlinear grey {Bernoulli} combined model based on a self-memory
  algorithm},
\newblock \bibinfo{journal}{Computers in Biology and Medicine}
  \bibinfo{volume}{105} (\bibinfo{year}{2019}) \bibinfo{pages}{81--91}.
\bibitem[{Wu et~al.(2012)Wu, Liu, and Wang}]{wu2012grey}
\bibinfo{author}{L.~Wu}, \bibinfo{author}{S.~Liu}, \bibinfo{author}{Y.~Wang},
\newblock \bibinfo{title}{Grey {L}otka--{V}olterra model and its application},
\newblock \bibinfo{journal}{Technological Forecasting and Social Change}
  \bibinfo{volume}{79} (\bibinfo{year}{2012}) \bibinfo{pages}{1720--1730}.
\bibitem[{Wang and Wang(2016)}]{wang2016application}
\bibinfo{author}{H.-T. Wang}, \bibinfo{author}{T.-C. Wang},
\newblock \bibinfo{title}{Application of the grey {L}otka--{V}olterra model to
  forecast the diffusion and competition analysis of the tv and smartphone
  industries},
\newblock \bibinfo{journal}{Technological Forecasting and Social Change}
  \bibinfo{volume}{106} (\bibinfo{year}{2016}) \bibinfo{pages}{37--44}.
\bibitem[{Wei et~al.(2018)Wei, Xie, and Hu}]{wei2018optimal}
\bibinfo{author}{B.~Wei}, \bibinfo{author}{N.~Xie}, \bibinfo{author}{A.~Hu},
\newblock \bibinfo{title}{Optimal solution for novel grey polynomial prediction
  model},
\newblock \bibinfo{journal}{Applied Mathematical Modelling}
  \bibinfo{volume}{62} (\bibinfo{year}{2018}) \bibinfo{pages}{717--727}.
\bibitem[{Ding et~al.(2021)Ding, Li, and Wu}]{ding2021novel}
\bibinfo{author}{S.~Ding}, \bibinfo{author}{R.~Li}, \bibinfo{author}{S.~Wu},
\newblock \bibinfo{title}{A novel composite forecasting framework by adaptive
  data preprocessing and optimized nonlinear grey {Bernoulli} model for new
  energy vehicles sales},
\newblock \bibinfo{journal}{Communications in Nonlinear Science and Numerical
  Simulation} \bibinfo{volume}{99} (\bibinfo{year}{2021})
  \bibinfo{pages}{105847}.
\bibitem[{Zeng and Li(2018)}]{Zeng2018Improved}
\bibinfo{author}{B.~Zeng}, \bibinfo{author}{C.~Li},
\newblock \bibinfo{title}{Improved multi-variable grey forecasting model with a
  dynamic background-value coefficient and its application},
\newblock \bibinfo{journal}{Computers and Industrial Engineering}
  \bibinfo{volume}{118} (\bibinfo{year}{2018}) \bibinfo{pages}{278--290}.
\bibitem[{Comert et~al.(2021)Comert, Begashaw, and Huynh}]{comert2021improved}
\bibinfo{author}{G.~Comert}, \bibinfo{author}{N.~Begashaw},
  \bibinfo{author}{N.~Huynh},
\newblock \bibinfo{title}{Improved grey system models for predicting traffic
  parameters},
\newblock \bibinfo{journal}{Expert Systems with Applications}
  \bibinfo{volume}{177} (\bibinfo{year}{2021}) \bibinfo{pages}{114972}.
\bibitem[{Lin and Lee(2007)}]{lin2007novel}
\bibinfo{author}{Y.-H. Lin}, \bibinfo{author}{P.-C. Lee},
\newblock \bibinfo{title}{Novel high-precision grey forecasting model},
\newblock \bibinfo{journal}{Automation in Construction} \bibinfo{volume}{16}
  (\bibinfo{year}{2007}) \bibinfo{pages}{771--777}.
\bibitem[{Lin et~al.(2009)Lin, Lee, and Chang}]{lin2009adaptive}
\bibinfo{author}{Y.-H. Lin}, \bibinfo{author}{P.-C. Lee},
  \bibinfo{author}{T.-P. Chang},
\newblock \bibinfo{title}{Adaptive and high-precision grey forecasting model},
\newblock \bibinfo{journal}{Expert Systems with Applications}
  \bibinfo{volume}{36} (\bibinfo{year}{2009}) \bibinfo{pages}{9658--9662}.
\bibitem[{Kayacan et~al.(2010)Kayacan, Ulutas, and Kaynak}]{kayacan2010grey}
\bibinfo{author}{E.~Kayacan}, \bibinfo{author}{B.~Ulutas},
  \bibinfo{author}{O.~Kaynak},
\newblock \bibinfo{title}{Grey system theory-based models in time series
  prediction},
\newblock \bibinfo{journal}{Expert Systems with Applications}
  \bibinfo{volume}{37} (\bibinfo{year}{2010}) \bibinfo{pages}{1784--1789}.
\bibitem[{Xiong et~al.(2014)Xiong, Dang, Yao, and Wang}]{xiong2014optimal}
\bibinfo{author}{P.~Xiong}, \bibinfo{author}{Y.~Dang},
  \bibinfo{author}{T.~Yao}, \bibinfo{author}{Z.~Wang},
\newblock \bibinfo{title}{Optimal modeling and forecasting of the energy
  consumption and production in {China}},
\newblock \bibinfo{journal}{Energy} \bibinfo{volume}{77} (\bibinfo{year}{2014})
  \bibinfo{pages}{623--634}.
\bibitem[{Kung and Yu(2008)}]{kung2008prediction}
\bibinfo{author}{L.-M. Kung}, \bibinfo{author}{S.-W. Yu},
\newblock \bibinfo{title}{Prediction of index futures returns and the analysis
  of financial spillovers---{A} comparison between {GARCH} and the grey
  theorem},
\newblock \bibinfo{journal}{European Journal of Operational Research}
  \bibinfo{volume}{186} (\bibinfo{year}{2008}) \bibinfo{pages}{1184--1200}.
\bibitem[{Xiong et~al.(2020)Xiong, Huang, Peng, and Wu}]{xiong2020examination}
\bibinfo{author}{P.~Xiong}, \bibinfo{author}{S.~Huang},
  \bibinfo{author}{M.~Peng}, \bibinfo{author}{X.~Wu},
\newblock \bibinfo{title}{Examination and prediction of fog and haze pollution
  using a multi-variable grey model based on interval number sequences},
\newblock \bibinfo{journal}{Applied Mathematical Modelling}
  \bibinfo{volume}{77} (\bibinfo{year}{2020}) \bibinfo{pages}{1531--1544}.
\bibitem[{Xiong et~al.(2011)Xiong, Dang, Wu, and Li}]{xiong2011combined}
\bibinfo{author}{P.~Xiong}, \bibinfo{author}{Y.~Dang}, \bibinfo{author}{X.~Wu},
  \bibinfo{author}{X.~Li},
\newblock \bibinfo{title}{Combined model based on optimized multi-variable grey
  model and multiple linear regression},
\newblock \bibinfo{journal}{Journal of Systems Engineering and Electronics}
  \bibinfo{volume}{22} (\bibinfo{year}{2011}) \bibinfo{pages}{615--620}.
\bibitem[{Dattner and Klaassen(2015)}]{dattner2015optimal}
\bibinfo{author}{I.~Dattner}, \bibinfo{author}{C.~A. Klaassen},
\newblock \bibinfo{title}{Optimal rate of direct estimators in systems of
  ordinary differential equations linear in functions of the parameters},
\newblock \bibinfo{journal}{Electronic Journal of Statistics}
  \bibinfo{volume}{9} (\bibinfo{year}{2015}) \bibinfo{pages}{1939--1973}.
\bibitem[{Dattner(2015)}]{dattner2015model}
\bibinfo{author}{I.~Dattner},
\newblock \bibinfo{title}{A model-based initial guess for estimating parameters
  in systems of ordinary differential equations},
\newblock \bibinfo{journal}{Biometrics} \bibinfo{volume}{71}
  (\bibinfo{year}{2015}) \bibinfo{pages}{1176--1184}.
\bibitem[{Wei et~al.(2020)Wei, Xie, and Yang}]{wei2019understanding}
\bibinfo{author}{B.~Wei}, \bibinfo{author}{N.~Xie}, \bibinfo{author}{L.~Yang},
\newblock \bibinfo{title}{Understanding cumulative sum operator in grey
  prediction model with integral matching},
\newblock \bibinfo{journal}{Communications in Nonlinear Science and Numerical
  Simulation} \bibinfo{volume}{82} (\bibinfo{year}{2020})
  \bibinfo{pages}{105076}.
\bibitem[{Ding and Li(2021)}]{ding2021forecasting}
\bibinfo{author}{S.~Ding}, \bibinfo{author}{R.~Li},
\newblock \bibinfo{title}{Forecasting the sales and stock of electric vehicles
  using a novel self-adaptive optimized grey model},
\newblock \bibinfo{journal}{Engineering Applications of Artificial
  Intelligence} \bibinfo{volume}{100} (\bibinfo{year}{2021})
  \bibinfo{pages}{104148}.
\bibitem[{Van~Loan and Golub(2013)}]{van1983matrix}
\bibinfo{author}{C.~F. Van~Loan}, \bibinfo{author}{G.~H. Golub},
  \bibinfo{title}{Matrix computations}, \bibinfo{edition}{4th} ed.,
  \bibinfo{publisher}{Johns Hopkins University Press}, \bibinfo{year}{2013}.
\bibitem[{Wei et~al.(2019)Wei, Xie, and Yang}]{wei2019data}
\bibinfo{author}{B.~Wei}, \bibinfo{author}{N.~Xie}, \bibinfo{author}{Y.~Yang},
\newblock \bibinfo{title}{Data-based structure selection for unified discrete
  grey prediction model},
\newblock \bibinfo{journal}{Expert Systems with Applications}
  \bibinfo{volume}{136} (\bibinfo{year}{2019}) \bibinfo{pages}{264--275}.
\bibitem[{Young et~al.(1999)Young, Pedregal, and Tych}]{young1999dynamic}
\bibinfo{author}{P.~C. Young}, \bibinfo{author}{D.~J. Pedregal},
  \bibinfo{author}{W.~Tych},
\newblock \bibinfo{title}{Dynamic harmonic regression},
\newblock \bibinfo{journal}{Journal of Forecasting} \bibinfo{volume}{18}
  (\bibinfo{year}{1999}) \bibinfo{pages}{369--394}.
\bibitem[{Tych et~al.(2002)Tych, Pedregal, Young, and
  Davies}]{tych2002unobserved}
\bibinfo{author}{W.~Tych}, \bibinfo{author}{D.~J. Pedregal},
  \bibinfo{author}{P.~C. Young}, \bibinfo{author}{J.~Davies},
\newblock \bibinfo{title}{An unobserved component model for multi-rate
  forecasting of telephone call demand: {The} design of a forecasting support
  system},
\newblock \bibinfo{journal}{International Journal of Forecasting}
  \bibinfo{volume}{18} (\bibinfo{year}{2002}) \bibinfo{pages}{673--695}.
\bibitem[{Young(2018)}]{Young:2017qf}
\bibinfo{author}{P.~C. Young},
\newblock \bibinfo{title}{Data-based mechanistic modelling and forecasting
  globally averaged surface temperature},
\newblock \bibinfo{journal}{International Journal of Forecasting}
  \bibinfo{volume}{34} (\bibinfo{year}{2018}) \bibinfo{pages}{315--334}.
\bibitem[{Famiglietti(2014)}]{famiglietti2014global}
\bibinfo{author}{J.~S. Famiglietti},
\newblock \bibinfo{title}{The global groundwater crisis},
\newblock \bibinfo{journal}{Nature Climate Change} \bibinfo{volume}{4}
  (\bibinfo{year}{2014}) \bibinfo{pages}{945}.
\bibitem[{Young et~al.(1996)Young, Parkinson, and Lees}]{young1996simplicity}
\bibinfo{author}{P.~Young}, \bibinfo{author}{S.~Parkinson},
  \bibinfo{author}{M.~Lees},
\newblock \bibinfo{title}{Simplicity out of complexity in environmental
  modelling: {Occam's} razor revisited},
\newblock \bibinfo{journal}{Journal of Applied Statistics} \bibinfo{volume}{23}
  (\bibinfo{year}{1996}) \bibinfo{pages}{165--210}.
\bibitem[{Hyndman and Khandakar(2008)}]{hyndman2008automatic}
\bibinfo{author}{R.~J. Hyndman}, \bibinfo{author}{Y.~Khandakar},
\newblock \bibinfo{title}{Automatic time series forecasting: {The} forecast
  package for {R}},
\newblock \bibinfo{journal}{Journal of Statistical Software}
  \bibinfo{volume}{27} (\bibinfo{year}{2008}) \bibinfo{pages}{1548--7660}.
\bibitem[{Meyer et~al.(2021)Meyer, Dimitriadou, Hornik, Weingessel, and
  Leisch}]{dimitriadou2019package}
\bibinfo{author}{D.~Meyer}, \bibinfo{author}{E.~Dimitriadou},
  \bibinfo{author}{K.~Hornik}, \bibinfo{author}{A.~Weingessel},
  \bibinfo{author}{F.~Leisch}, \bibinfo{title}{e1071: Misc Functions of the
  Department of Statistics, Probability Theory Group (Formerly: E1071), TU
  Wien}, \bibinfo{year}{2021}. \URLprefix
  \url{https://CRAN.R-project.org/package=e1071}, \bibinfo{note}{r package
  version 1.7-7}.

\end{thebibliography}

\end{document}